\begin{document}

\title{Modeling the dynamics of Hepatitis C Virus with combined antiviral drug therapy: Interferon and Ribavirin. 
}


\author{Sandip Banerjee \and Ram Keval \and S. Gakkhar}


\institute{Sandip Banerjee \at
              Department of Mathematics \\
              Indian Institute of Technology Roorkee (IITR)\\
              Roorkee 247667, Uttaranchal, India.\\
              Tel.: +91-1332-285697\\
              \email{sandofma@iirt.ernet.in}           
           \and
           Ram Keval \at
              \email{ramkeval@gmail.com}
              \and
              S. Gakkhar \at
              \email{sungkfma@iitr.ernet.in}
}

\date{Received: date / Accepted: date}

\maketitle

\begin{abstract}
A mathematical modeling of Hepatitis C Virus (HCV) dynamics has been presented in this paper. The proposed model, which involves four coupled ordinary differential equations, describes the interaction of target cells (hepatocytes), infected cells, infectious virions and non-infectious virions. The model takes into consideration the addition of ribavirin to interferon therapy and explains the dynamics regarding biphasic and triphasic decline of viral load in the model. A critical drug efficiency parameter has been defined and it is shown that for efficiencies above this critical value, HCV is eradicated whereas for efficiencies lower this critical value, a new steady state for infectious virions is reached, which is lower than the previous steady state.
\keywords{Hepatitis C Virus (HCV) \and Target cells \and Infected cells \and Infectious virions \and Noninfectious Virions \and Interferon \and Ribavirin.}

\subclass{92B05 \and 92D30}
\end{abstract}

\section{Introduction}
\label{intro}
Hepatitis C is an infectious viral disease caused by Hepatitis C Virus (HCV) and its infection route is via the blood. HCV was first identified in 1989, when the expression of cDNAs obtained from the blood plasma of a chimpanzee was induced with hepatitis non-A, non B and was screened with convalescent serum \cite{Choo89}. It should be noted that HCV was the first virus to be discovered by molecular biological method and not by previously used virological methods. Accordingly to WHO, individuals between 200 to 300 millions worldwide are currently infected with HCV \cite{Who00}. With the discovery of HCV virus, a new dimension on the treatment and prevention of liver diseases evolved \cite{Hayashi06}. Previously, patients who were thought to be suffering from hepatitis non-A or non-B or even alcoholic liver disease, actually have hepatitis C. As the dynamics of hepatitis C virus was understood  with time, it becomes clear that the disease is a major threat for hepatocellular carcinoma.\\

Since the identification of the hepatitis C virus, great efforts have been made to counter-attack it with an antiviral therapy. As a crucial mediator of the innate antiviral immune response, interferon-$\alpha$ (IFN-$\alpha$) was a natural choice for treatment \cite{Jordan05}. With IFN-$\alpha$ therapy, there was a rapid decline in HCV-RNA levels in serum but it can achieve only moderate success. With IFN-$\alpha$ monotherapy, that is, therapy will IFN-$\alpha$ alone do not achieve clearance of HCV in half of the infected individuals suffering from chronic hepatitis C. A major break through came with the addition of broad-spectrum antiviral agent ribavirin to IFN-$\alpha$ treatment. With combination of pegyleted interferon and ribavirin, a sustained response rates of 54-56\% can be achieved \cite{Mann01,Fried02,Had04}. But this combined therapy is expensive and have some side effects. Therefore, there is a need for more effective and better tolerated therapies for hepatitis C. However, due to difficulty of developing new, potent, specific agents against HCV, this combined therapy of pegylated interferon and ribavirin is going to remain as the sole therapy in the near future. With more better understanding of the mechanism of interferon and ribavirin together with improvement in dosing and dose regimen, it is expected that there is going to be substantial improvement in the response rates of the patients suffering from hepatitis C.\\

Some patients suffering from hepatitis C naturally clear the virus infection without any medical intervention but mostly (55-85\%) cannot do that and develop chronic HCV infection \cite{Hoof02}. In the standard protocol for the treatment of hepatitis C, a patient is given weekly injection of IFN and take ribavirin pills daily for the period of treatment \cite{Dienst06}. If after six months of therapy, a patient does not show any trace of hepatitis C viral load, then the patient is said to have achieved sustained virological response (SVR), implying clinically cured. The aim of this combined therapy is to reduce the viral load to a minimum, so that a patient achieves SVR. If the therapy fails, a patient fails to achieve SVR and suffers from chronic hepatitis C. It should be noted that in hepatitis C patients, the rate of achievements of SVR depends on the genotype of the particular infecting virus. Genotypes 1 and 4 are usually treated for 48 weeks whereas genotypes 2 and 3 are done for 24 weeks with lesser doses of ribavirin \cite{Stra04}. It is observed that 46\% of the patients with genotype 1 have achieved SVR under this treatment whereas SVR rates of 76-82\% are observed in patients with HCV genotypes 2 or 3 \cite{Sulk03}.\\

To understand the dynamics of hepatitis C, specially, to determine the efficiency of IFN-$\alpha$ as monotherapy and also in combination with ribavirin, mathematical models have been used extensively, which have provided insights into the pathogenesis of HCV in vivo \cite{Neum98,Dixit04,Dahari07a,Dahari07b,Chakra08}. Other than suggesting predominated mechanism(s) of drug action, the models have been successful in explaining the confounding patterns of viral load changes in HCV infected patients undergoing therapy.\\

Neumann et. al \cite{Neum98}, used the basic model of viral dynamics to HCV, which assumes a simplified view of HCV infection by taking into account of IFN-$\alpha$ monotherapy. The model consists of three populations, namely, uninfected hepatocytes (the target cell), productively infected hepatocytes and free HCV virions and describes the response to interferon therapy.  Though the basic model by Neumann \cite{Neum98} provides insights into the HCV dynamics in vivo but it fails to predict the long term response rates observed during combination therapy as it ignores the influence of ribavirin. Dixit et. al. \cite{Dixit04} made an advancement in the basic model by taking into account the role of ribavirin action along with interferon. The dynamics of model suggests that ribavirin has very little role to play during the first phase decline induced by interferon, as ribavirin does not alter the viral production. However, ribavirin enhances the second phase slope when interferon effectiveness is small, which is in close agreement with experiments \cite{Stefan02} .\\

One of the drawback of the model by Dixit et. al. \cite{Dixit04} is that non-responders and patients with triphasic decay patterns are not studied and is restricted to biphasic responses only. A modified model, which include the proliferation of uninfected and infected cells driven by liver homeostatic mechanisms was studied by Dahari et. al. \cite{Dahari07a,Dahari07b}. The model predicts the triphasic decline and succeeds in explaining the origins of non-response. In a recent study, Chakrabarty and Joshi \cite{Chakra08} presented a model where they used deterministic control theory to obtain an optimal treatment strategy using interferon and ribavirin. The optimal treatment which they obtained succeeded in reducing the levels of viral load and at the same time keeping the side effects of the drug to a minimal. Swati DebRoy et al. \cite{Swati10} discussed one of the common and life threatening side effect, namely, hemolytic anemia, of hepatitis C infection. They used an extension and modification of Neumann's model \cite{Neum98} to study the effect of combination therapy in the light of anemia and succeeded in providing a quantification of the amount of drug a body can tolerate without succumbing to hemolytic anemia.\\

In this paper, a mathematical model is proposed depicting the behavior of hepatitis C virus. The idea is to capture the dynamics of the model due to combined effect of antiviral drug therapy: interferon and ribavirin. In section 2, formulation of the model based of the schematic diagram has been discussed. Qualitative analysis of the model has been shown in section 3, which includes determination of equilibrium points, positivity and boundedness of the system, local stability analysis and persistence of the system. Global stability analysis about the endemic equilibrium point has been shown in section 4. Numerical results, capturing the dynamics of HCV are discussed in section 5. The paper ends with a discussion.

\section{Mathematical Model}
Adaptive immune responses mediated by T cells are essential in the control of HCV and viral clearance. Recent studies in which memory CD4+ and CD8+ T cells were depleted have confirmed the critical role of these cells in controlling HCV infections \cite{Grakoui03}. Hence, these cells are the target cells (or hepatocytes). Once infected with HCV, these target cells becomes infected hepatocytes, which replicates the hepatitis C virus. The replicated viruses may be infectious or non-infectious, depending on the effectiveness of the drug(s) therapy. Figure 1 gives the schematic representation of the above biological scenario. The proposed model is given by the following system of coupled ordinary differential equations:
\begin{eqnarray}
\label{eqn-1-hep}
\frac{dT}{dt}&=&s + r T\left(1-\frac{T+I}{k}\right) -d_{1} T-(1-c \eta_{1})\alpha V_{I} T\\
 \frac{dI}{dt}&=&(1-c \eta_{1})\alpha V_{I} T - d_{2}I \\
 \frac{dV_{I}}{dt}&=&\left(1-\frac{\eta_{r}+\eta_{1}}{2}\right)\beta I-d_{3}V_{I}\\
  \frac{dV_{NI}}{dt}&=&\left(\frac{\eta_{r}+\eta_{1}}{2}\right)\beta I-d_{3}V_{NI}
\end{eqnarray}
The first equation gives the dynamics of the target cells, that is, hepatocytes. They are produced from a source at a constant rate $s$ and at the same time, their growth is augmented by a logistic term with an intrinsic growth rate $r$ and carrying capacity $k$. The hepatocytes die naturally at a rate $d_1$. The logistic term $rT\left(1-\frac{T+I}{k}\right)$ is more realistic in the sense that it incorporates total hepatocytes, both non-infected and infected ones. Here $\eta_{1}$ denotes the effectiveness (or efficacy) of interferon in blocking the release of new virions, $c \eta_{1}$ is fraction of the efficacy $(0 < c < 1)$ and hence $(1-c \eta_{1})$ gives the ineffectiveness of interferon, which fails to stop the target cells getting infected. The number of cells which gets infected is proportional to the number of infection virions and available target cells (hepatocytes) with a proportionality constant $\alpha$, which explains the term $(1-c \eta_{1})\alpha V_{I} T$ in (1) and (2). Infected cells die at a rate $d_{2}$. Equations (3) and (4) give the dynamics of virions, namely, the infected and non infected ones. The term $\left(\frac{\eta_{r}+\eta_{1}}{2}\right)$ is the effectiveness of combined effect of interferon and ribavirin $\left(0 < \eta_{r} < 1, 0 < \eta_{1} <1 \Rightarrow 0 < \frac{\eta_{r}+\eta_{1}}{2} <1\right)$ and hence $\left(1-\frac{\eta_{r}+\eta_{1}}{2}\right)$ is the ineffectiveness of the combined effect, which leads to the growth of the infection virions, proportional to the number of infected cells $(I)$ with proportionality constant $ \beta$ (infection rate). This explain the term $\left(1-\frac{\eta_{r}+\eta_{1}}{2}\right) \beta I$ in (3). The infectious virions die at a rate $d_{3}$. In (4), the combined effect of interferon and ribavirin results in virions $(V_{NI})$, which are non-infectious in nature and also die at a rate $d_{3}$. System (1-4) has to be analyzed with the following initial conditions: $ T(0) > 0, I(0) > 0, V_{I}(0) > 0, V_{NI}(0) > 0.$ All system parameters are positive.
\section{Qualitative Analysis of the model}
\subsection{\textbf{Positivity and Boundedness}}
\begin{theorem}\label{th1}
The solutions of the system (1-4) are positive for all t $>$ 0.
\end{theorem}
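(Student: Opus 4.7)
The plan is to exploit the fact that every equation in (1)--(4) has the scalar linear form $\dot x = b(t)-a(t)x$, where the coefficients $a,b$ may depend on the full state. For any such ODE, the integrating factor $\mu(t)=\exp\!\bigl(\int_0^t a(\tau)\,d\tau\bigr)$ yields the variation-of-constants representation
\[
x(t)=x(0)\exp\!\Bigl(-\!\int_0^t a(\sigma)\,d\sigma\Bigr)+\int_0^t b(\tau)\exp\!\Bigl(-\!\int_\tau^t a(\sigma)\,d\sigma\Bigr)d\tau,
\]
so $x(t)>0$ on any interval on which $x(0)>0$ and $b\ge 0$.

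First I would apply this to the $T$ equation, taking $a_1(t)=d_1+(1-c\eta_1)\alpha V_I(t)-r\bigl(1-(T(t)+I(t))/k\bigr)$ and $b_1\equiv s>0$. Since $T(0)>0$ and the source $s$ is strictly positive, the formula immediately gives $T(t)>0$ throughout the maximal existence interval, \emph{without} any hypothesis on the signs of $I$, $V_I$, $V_{NI}$. This is the key observation that bootstraps the remaining argument.

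For the remaining three unknowns the source terms feed into one another, so I would break the apparent circularity by a minimality argument. Define $t^{*}=\sup\{t\ge 0 : I(\tau),V_I(\tau),V_{NI}(\tau)>0 \text{ for all } 0\le\tau\le t\}$, and suppose for contradiction that $t^{*}<\infty$. By continuity the three quantities are nonnegative on $[0,t^{*}]$ and at least one of them vanishes at $t^{*}$. If $I(t^{*})=0$, the integrating-factor formula applied to (2) gives
\[
I(t^{*})=I(0)e^{-d_{2}t^{*}}+\int_0^{t^{*}}(1-c\eta_1)\alpha\,V_I(\tau)\,T(\tau)\,e^{-d_{2}(t^{*}-\tau)}\,d\tau,
\]
and the right-hand side is strictly positive because $T>0$ everywhere, $V_I\ge 0$ on $[0,t^{*}]$, and $I(0)>0$. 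The cases $V_I(t^{*})=0$ and $V_{NI}(t^{*})=0$ are handled identically, using that $1-(\eta_r+\eta_1)/2$ and $(\eta_r+\eta_1)/2$ are both strictly positive under the standing hypothesis on the drug efficacies, together with the nonnegativity of $I$ on $[0,t^{*}]$. Each case contradicts the assumption $t^{*}<\infty$, forcing positivity of all four variables for every $t>0$ in the existence interval.

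The only real subtlety is this coordination step: the naive ``check that $\dot x_i\ge 0$ whenever $x_i=0$'' recipe produces only nonnegativity and requires knowing the other components are nonnegative, which is precisely what is being established. The joint bootstrap through $t^{*}$ resolves this; everything else is routine manipulation of the variation-of-constants representation together with the sign information $s>0$, $0<c\eta_1<1$, and $0<(\eta_r+\eta_1)/2<1$ recorded when the model was formulated.
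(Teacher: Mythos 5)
Your proof is correct, and it is worth noting that it is tighter than the argument the paper itself gives. The ingredients are the same --- a contradiction at the first vanishing time for $T$ (where $\dot T = s>0$ independently of the signs of the other components), and variation of constants for the remaining equations --- but the paper treats $I$, $V_I$, $V_{NI}$ \emph{sequentially}: it derives $\dot I \ge -d_2 I$, which already presupposes $V_I\ge 0$, and then proves $V_I\ge 0$ from $I>0$. That is exactly the circularity you flag, and your single first-exit time $t^*$ for the triple $(I,V_I,V_{NI})$, combined with the fact that each of the three variation-of-constants formulas retains the strictly positive term $x(0)e^{-d_i t^*}$ while the integral term only needs \emph{nonnegativity} of the other components on $[0,t^*]$, resolves it cleanly. (The paper's displayed formulas for $V_I$ and $V_{NI}$ also drop the initial-condition terms and carry a sign slip in the exponential under the integral; your versions are the correct ones.) So: same basic strategy, but your coordination step is a genuine repair rather than a cosmetic difference.
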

\begin{proof}
Let $(T(t), I(t), V_{I}(t), V_{NI}(t))$ be a solution of system (1-4). Consider for some time $t_{1} > 0, T(t_{1})=0$ and if $t_{1}$ be the first such time, then $\frac{d T(t_{1})}{d t}\leq 0.$ But from (1), $\frac{d T(t_{1})}{d t} = s > 0$, which is a contradiction. Therefore, $T(t)>0,$ for all $t > 0$.\\
It is again observed that $I(t)>0 ~\textrm{for}~ t \geq 0$ as
\begin{eqnarray}
\frac{d I}{d t} \geq - d_{2}I \Rightarrow I(t)=I(0) e^{-d_{2}t}\\
\end{eqnarray}

Similarly, (3) and (4) gives
\begin{eqnarray}
V_{I}&=&\left(1-\frac{\eta_{r}+\eta_{1}}{2}\right)\beta e^{-d_{3}t}\int\limits_{0}^{t}I(s) e^{-d_{3}s}d s >0\\ \and
V_{NI}&=&\left(\frac{\eta_{r}+\eta_{1}}{2}\right)\beta e^{-d_{3}t}\int\limits_{0}^{t}I(s) e^{-d_{3}s}d s >0,
\end{eqnarray}
Therefore, it is concluded that $(T(t), I(t), V_{I}(t), V_{NI}(t)) > 0$ for all $t \geq 0.$
\end{proof}
\begin{theorem}\label{th1}
The solutions of the system (1-4) are bounded.
\end{theorem}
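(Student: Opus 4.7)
The plan is to bound $N := T + I$ first, reducing the problem to a scalar differential inequality, and then to bound the virion compartments $V_I$ and $V_{NI}$ using the resulting bound on $I$. Adding equations (1) and (2), the infection transfer $(1-c\eta_1)\alpha V_I T$ cancels, giving
\[
\frac{dN}{dt} = s + rT\left(1 - \frac{N}{k}\right) - d_1 T - d_2 I.
\]
Set $\mu := \min(d_1, d_2) > 0$. By the positivity already established, $T, I \ge 0$, so whenever $N \ge k$ the logistic factor $1 - N/k$ is non-positive, the term $rT(1-N/k)$ drops out favorably, and $d_1 T + d_2 I \ge \mu N$, yielding the clean inequality $\frac{dN}{dt} \le s - \mu N$ on $\{N \ge k\}$.

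Let $L := \max\{N(0),\, s/\mu,\, k\}$. The barrier argument then runs as follows: at any time $t_0$ with $N(t_0) = L$ one has $N(t_0) \ge k$ and $N(t_0) \ge s/\mu$, hence $\dot N(t_0) \le s - \mu L \le 0$, so the trajectory cannot cross the level $L$ from below. Thus $T(t), I(t) \le L$ for all $t \ge 0$. Equations (3) and (4) then become linear ODEs driven by a bounded source: $\dot V_I \le \beta L - d_3 V_I$ and analogously for $V_{NI}$. An integrating-factor estimate (equivalently, Gronwall applied to $V_I - \beta L/d_3$) gives $V_I(t), V_{NI}(t) \le \max\{V_I(0), V_{NI}(0), \beta L / d_3\}$, completing boundedness of the full state.

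The main obstacle I anticipate is the logistic term $rT(1-(T+I)/k)$: it couples $T$ and $I$ and is of indeterminate sign, so neither component can be bounded in isolation. The decisive observations are that (i) the bilinear infection flux cancels in the sum $T+I$, so $N$ satisfies an ODE whose nonlinear ingredient is only the logistic contribution plus the linear decay, and (ii) that logistic contribution is dissipative precisely on the half-space $\{N \ge k\}$, which is exactly the regime where an a priori decay estimate is needed to close the argument. Everything else reduces to routine linear comparison.
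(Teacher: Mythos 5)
Your proof is correct, but it takes a genuinely different route from the paper's, and the difference matters. The paper sets $\rho = T+I+V_I+V_{NI}$, differentiates once, and absorbs everything into $\dot\rho \le -A\rho + (s+\tfrac{rk}{4})$ with $A=\min[d_1,\,d_2-\beta,\,d_3]$; this is a quick one-shot Lyapunov estimate, but it only works under the extra hypothesis $d_2>\beta$ (needed so that the production term $+\beta I$ in $\dot V_I+\dot V_{NI}$ can be dominated by the decay $-d_2 I$ of the infected cells), a condition that does not appear in the theorem statement and is in fact violated by the paper's own parameter table ($d_2=1.0$, $\beta=2.9$). Your two-stage argument — exploit the cancellation of the bilinear infection flux in $N=T+I$, close a scalar differential inequality for $N$ on the half-space $\{N\ge k\}$, and then treat (3)--(4) as linear equations with bounded forcing — needs no relation between $d_2$ and $\beta$ and is therefore the stronger proof; the price is an extra cascading step. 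Two small points of hygiene: at the barrier you only get $\dot N(t_0)\le 0$, not $<0$, so to rule out crossing you should either run a first-crossing argument (on any interval where $N>L$ one has $\dot N\le s-\mu N<s-\mu L\le 0$, so $N$ decreases there, contradicting $N(t_1)>N(t_0)=L$) or compare against the level $L+\epsilon$ and let $\epsilon\to 0$; and the final bound for the virions should read $V_I(t)\le\max\{V_I(0),\,\beta L/d_3\}$ and $V_{NI}(t)\le\max\{V_{NI}(0),\,\beta L/d_3\}$ separately rather than mixing the two initial conditions. Neither is a gap, just phrasing.
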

\begin{proof}
Let $(T(t), I(t), V_{I}(t), V_{NI}(t))$ be any solution of system. Let
\begin{equation}
\rho(t) = T(t)+I(t)+V_{I}(t)+V_{NI}(t)
\end{equation}
The derivative of $\rho(t)$ along the positive solutions of system \ref{eqn-1-hep} is given by
\begin{eqnarray}\label{eqn-2-hep}
\nonumber \frac{d\rho}{dt}&=& s + r T-\frac{r T^{2}}{k}-\frac{r T I}{k}-d_{1} T-d_{2}I+\beta I-d_{3}V_{I}-d_{3}V_{NI}\\
 \nonumber &\leq & s + r T-\frac{r T^{2}}{k}-d_{1} T-d_{2}I+\beta I-d_{3}V_{I}-d_{3}V_{NI}\\
\nonumber & \leq & s + r T-\frac{r T^{2}}{k}- A \rho(t)\\
\nonumber &=& - A \rho(t)+(s+\frac{r k}{4}) -\left(\frac{\sqrt{r}T}{\sqrt{k}}-\frac{\sqrt{r k}}{2}\right)^{2}\\
 &\leq &- A \rho(t)+(s+\frac{r k}{4})
\end{eqnarray}
where $A = \min [d_{1}, (d_{2}-\beta), d_{3}]$ and $d_2>\beta$. It follows from (\ref{eqn-2-hep}) that \[\lim_{t \to +\infty}\sup \rho(t)\leq (s+\frac{r k}{4}) = M^{*}.\] Therefore, there exist positive constants $M > M^{*}$ and $T_{1} > 0$ such that if $t\geq T_{1}, \rho(t)< M.$ This completes the proof.
\end{proof}

\subsection{\textbf{Equilibria}}
System (1-4) has the following non-negative equilibrium points:\\
  (a)  $E_{0}(\widehat{T}, 0, 0, 0)$  where $\widehat{T}=\frac{k}{2 r} \left[(r-d_{1})+\sqrt{(r-d_{1})^2+\frac{4 r s}{k}} \right]$, where $r > d_{1}$. We assume $s\leq k d_1$ so that the model is physiologically realistic, that is, $\widehat{T} \leq k$.\\
  (b)  $E_{*}(T^{*},I^{*},V_{I}^{*},V_{NI}^{*})$, where,

\begin{eqnarray*}
T^{*} = \frac{d_{2}d_{3}}{(1-c \eta_{1})\left(1-\frac{\eta_{r}+\eta_{1}}{2}\right)\alpha \beta},\\
I^{*}=\frac{s r \hat{R}^{2} + (r-d_{1})^{2} k (\hat{R}-1)}{r \hat{R}[d_{2} \hat{R}+(r-d_{1})]},\\
V_{I}^{*}=\frac{\left(1-\frac{\eta_{r}+\eta_{1}}{2}\right)\beta I^{*}}{d_{3}},\\
V_{NI}^{*} =\frac{\left(\frac{\eta_{r}+\eta_{1}}{2}\right)\beta I^{*}}{d_{3}}.
\end{eqnarray*}

Here, $\hat{R}$, the controlled reproductive number (CRN) \cite{Swati10} is defined as $\hat{R}=\frac{R_{0}  k}{\widehat{T}}\left(\frac{r-d_{1}}{r}\right)$  and the basic reproductive number of the model is calculated as $R_{0}=\frac{\widehat{T}}{T^{*}}$. It may be noted that $E_{*}$ exists even when $\hat{R}<1$, provided, $s r \hat{R}^{2} + (r-d_{1})^{2} k (\hat{R}-1)>0$. However, it always exists for $\hat{R}>1$

\subsection{\textbf{Local Stability Analysis}}
\begin{theorem}
The disease free equilibrium point $E_0$ is locally asymptotically stable if $R_0 < 1$ and unstable if $R_0 > 1$.
\end{theorem}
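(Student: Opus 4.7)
The plan is to linearize the system (1--4) about $E_0=(\widehat{T},0,0,0)$ and exploit the sparsity of the Jacobian there to reduce the $4\times 4$ eigenvalue problem to essentially a $2\times 2$ one. Concretely, I will write the Jacobian $J(E_0)$ and observe that, because $I=V_I=V_{NI}=0$ at $E_0$, the matrix is block triangular: the first column has only the $(1,1)$-entry nonzero, and the last row and last column have only the $(4,4)$-entry $-d_3$ nonzero. Consequently the spectrum of $J(E_0)$ consists of $J_{11}$, $-d_3$, and the two eigenvalues of the $2\times 2$ sub-block coupling $I$ and $V_I$.

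Next I will deal with the two ``trivial'' eigenvalues. One of them is $\lambda=-d_3<0$. The other is $\lambda_1=J_{11}=r(1-2\widehat{T}/k)-d_1$, whose sign I will extract using the defining equation of $\widehat{T}$, namely $s+r\widehat{T}(1-\widehat{T}/k)-d_1\widehat{T}=0$, which rewrites as $r-r\widehat{T}/k-d_1=-s/\widehat{T}$. Substituting this into $\lambda_1$ gives $\lambda_1=-s/\widehat{T}-r\widehat{T}/k<0$, so $\lambda_1$ is negative unconditionally. This step is where the small algebraic care is required and is, as far as obstacles go, the main one.

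The stability criterion then reduces to the $2\times 2$ block
\begin{equation*}
B=\begin{pmatrix} -d_2 & (1-c\eta_1)\alpha\widehat{T}\\[2pt] \bigl(1-\tfrac{\eta_r+\eta_1}{2}\bigr)\beta & -d_3\end{pmatrix}.
\end{equation*}
Its trace is $-(d_2+d_3)<0$ and its determinant is
\begin{equation*}
\det B = d_2 d_3-(1-c\eta_1)\Bigl(1-\tfrac{\eta_r+\eta_1}{2}\Bigr)\alpha\beta\,\widehat{T}
=d_2 d_3\Bigl(1-\tfrac{\widehat{T}}{T^{*}}\Bigr)=d_2 d_3(1-R_0),
\end{equation*}
using the expression for $T^{*}$ given in the equilibrium analysis and the definition $R_0=\widehat{T}/T^{*}$.

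To conclude, I invoke the standard Routh--Hurwitz criterion for a $2\times 2$ matrix: both eigenvalues of $B$ have negative real parts iff $\operatorname{tr}B<0$ and $\det B>0$. Thus if $R_0<1$ then $\det B>0$, all four eigenvalues of $J(E_0)$ lie in the open left half-plane, and $E_0$ is locally asymptotically stable. If $R_0>1$ then $\det B<0$, so $B$ has one positive real eigenvalue and $E_0$ is unstable (a saddle in the $(I,V_I)$-directions). The transition at $R_0=1$ is transcritical, with $E_0$ and $E_{*}$ exchanging stability as $R_0$ crosses one.
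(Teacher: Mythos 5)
Your proof is correct and follows essentially the same route as the paper: linearize at $E_0$, peel off the eigenvalues $-d_3$ and the $(1,1)$-entry, and apply the trace--determinant (Routh--Hurwitz) test to the $2\times 2$ block in the $(I,V_I)$-directions, with $\det B=d_2d_3(1-R_0)$ deciding stability. In fact your argument is slightly more complete than the paper's, since you use the equilibrium relation $s+r\widehat{T}(1-\widehat{T}/k)-d_1\widehat{T}=0$ to show the eigenvalue $(r-d_1)-2r\widehat{T}/k=-s/\widehat{T}-r\widehat{T}/k$ is negative, a fact the paper merely asserts.
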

\begin{proof}
The variational matrix for the equilibrium point $E_{0}$ is \\
$V_{0}=\left(
  \begin{array}{cccc}
     (r-d_{1})-\frac{2 r \widehat{T}}{k} &-\frac{r \widehat{T} }{k}& -(1-c \eta_{1})\alpha  \widehat{T}  & 0 \\
    0 & -d_{2} & (1-c \eta_{1})\alpha  \widehat{T}& 0 \\
    0 & \left(1-\frac{\eta_{r}+\eta_{1}}{2}\right)\beta & -d_{3} & 0 \\
    0 & \left(\frac{\eta_{r}+\eta_{1}}{2}\right)\beta & 0 & -d_{3}\\
  \end{array}
\right)\\\\\nonumber$\\
The corresponding characteristic equation is
\begin{equation}\label{eqn-3-hep}
\left(\lambda-(r-d_{1})+\frac{2 r \widehat{T}}{k}\right)(\lambda+d_{3})(\lambda^{2}-C \lambda+D)=0
\end{equation}
where $C =-(d_{2}+d_{3})$ and $D=[d_{2}d_{3}-\frac{1}{2}(1-c \eta_{1})(2-\eta_{r}-\eta_{1})\alpha\beta \widehat{T}]$. For local asymptotic stability, all roots of the characteristic equation must be negative or have negative real parts. Since, the linear factor gives negative eigenvalues and $C<0$, for stability $D$ must be positive, that is, $\frac{\widehat{T}}{T^{*}} < 1$. Therefore, the uninfected steady state $E_{0}$ is stable if $R_{0}< 1$ and unstable if $R_{0}>1$.
\end{proof}
\begin{theorem}
The endemic equilibrium point $E_{*}$ is locally asymptotically stable if $1 < \widehat{R} <3$.
\end{theorem}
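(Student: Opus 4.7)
The plan is to linearize around $E_*$ and apply the Routh--Hurwitz criterion. The first simplification to exploit is that $V_{NI}$ does not appear in equations (1)--(3), so the Jacobian at $E_*$ has block-triangular form with a $(4,4)$-entry equal to $-d_3$. One eigenvalue is therefore $-d_3<0$ and it suffices to analyze the $3\times 3$ block corresponding to $(T,I,V_I)$. I would compute that block and simplify its entries using the equilibrium identities
\begin{equation*}
(1-c\eta_1)\alpha V_I^{*} T^{*} = d_2 I^{*}, \qquad \Bigl(1-\tfrac{\eta_r+\eta_1}{2}\Bigr)\beta I^{*}= d_3 V_I^{*},
\end{equation*}
together with the relation $s+rT^{*}(1-(T^{*}+I^{*})/k)-d_1 T^{*} = (1-c\eta_1)\alpha V_I^{*} T^{*}$, which lets me rewrite the $(1,1)$-entry in the convenient form $-\tfrac{rT^{*}}{k} - \tfrac{s}{T^{*}} - (1-c\eta_1)\alpha V_I^{*}$. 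This keeps the off-diagonal entries in terms of $T^{*}$ and $V_I^{*}$ directly.

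Next I would write the characteristic polynomial of the $3\times 3$ block in the standard form $\lambda^3+a_1\lambda^2+a_2\lambda+a_3=0$ and invoke the Routh--Hurwitz criterion: $E_*$ is locally asymptotically stable if and only if $a_1>0$, $a_3>0$, and $a_1 a_2 > a_3$. The diagonal of the Jacobian is strictly negative under the reformulation above, so $a_1 = -\operatorname{tr} J > 0$ is immediate. The constant $a_3 = -\det J$ can be expanded along the third row; after substituting $d_3 = (1-(\eta_r+\eta_1)/2)\beta I^{*}/V_I^{*}$ and the definition $T^{*}= d_2 d_3 / \bigl[(1-c\eta_1)(1-(\eta_r+\eta_1)/2)\alpha\beta\bigr]$, the expression for $a_3$ should reduce to a multiple of $(1-c\eta_1)\alpha V_I^{*}\bigl(\tfrac{rT^{*}}{k}+\tfrac{s}{T^{*}}\bigr)$ plus terms that are manifestly positive when $\hat R>1$, because $\hat R>1$ is exactly the condition that makes $V_I^{*}>0$ (equivalently $I^{*}>0$) in the formulas given for $E_*$.

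The main obstacle, and the place where the hypothesis $\hat R<3$ enters, is the Routh--Hurwitz determinant $a_1 a_2 - a_3 > 0$. I would compute $a_2$ as the sum of the three $2\times 2$ principal minors of $J$, then form $a_1 a_2 - a_3$ and collect terms by the diagonal pieces $\tfrac{rT^{*}}{k}$, $\tfrac{s}{T^{*}}$, $(1-c\eta_1)\alpha V_I^{*}$, $d_2$, $d_3$. Using $\hat R = \widehat T / T^{*} \cdot (r-d_1)/r \cdot (k/\widehat T)$ and the explicit formula for $I^{*}$, the quantity $(1-c\eta_1)\alpha V_I^{*}$ can be expressed as a rational function of $\hat R$ whose numerator is essentially $\hat R - 1$ and whose denominator is bounded; the cross term coming from the logistic proliferation then produces a factor of the form $(3-\hat R)$ after cancellation. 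Thus $a_1 a_2 - a_3$ factors as a positive quantity times $(\hat R - 1)(3-\hat R)$ (up to strictly positive additive terms), which is positive precisely when $1<\hat R<3$. Verifying this factorization is the bookkeeping-heavy but crucial step; once done, Routh--Hurwitz yields the claimed local asymptotic stability.
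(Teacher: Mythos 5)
Your proposal follows essentially the same route as the paper: the $V_{NI}$ equation decouples to give the eigenvalue $-d_3$, and the remaining cubic in $(T,I,V_I)$ is handled by Routh--Hurwitz, with $A_1>0$ unconditionally, $A_2,A_3>0$ for $\hat{R}>1$, and $A_1A_2-A_3>0$ for $1<\hat{R}<3$. The only caveat is that the paper's explicit expansion of $A_1A_2-A_3$ is a sum containing one term proportional to $\left(\tfrac{3}{\hat{R}}-1\right)$ and another carrying a factor $\left(1-\tfrac{1}{\hat{R}}\right)$, with the remaining terms manifestly positive, rather than the literal $(\hat{R}-1)(3-\hat{R})$ product you anticipate --- but the sufficiency argument is the same, and completing your sketch requires only the bookkeeping you already identified as the crucial step.
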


\begin{proof}

The variational matrix for the equilibrium point $E_{*}$ is

$V_{2}=\left(
  \begin{array}{cccc}
    -\frac{s r \hat{R}}{k(r-d_{1})}-\frac{(r-d_{1})}{\hat{R}}& -\frac{(r-d_{1})}{\hat{R}} & -\frac{(1-c \eta_{1})\alpha k(r-d_{1})}{r \hat{R}}& 0 \\
   d_{2}[\frac{s r \hat{R}^2+k(r-d_{1})^2(\hat{R}-1)}{(r-d_{1})k(d_{2}\hat{R}+(r-d_{1}))}]& -d_{2} & \frac{(1-c \eta_{1})\alpha k(r-d_{1})}{r \hat{R}} & 0 \\
    0 & \left(1-\frac{\eta_{r}+\eta_{1}}{2}\right)\beta & -d_{3} & 0 \\
    0 & \left(\frac{\eta_{r}+\eta_{1}}{2}\right)\beta & 0 & -d_{3}\\
  \end{array}
\right)$\\

The characteristic equation is obtained as
\begin{equation}\label{eqn-4-hep}
(\lambda_{1}+ d_{3}) (\lambda^{3}+A_{1} \lambda^{2}+ A_{2}\lambda + A_{3})=0
\end{equation}
where
\begin{eqnarray*}
A_{1} &=&d_{2}+d_{3}+\frac{(r-d_{1})}{\hat{R} }+\frac{r \hat{R} s}{(r-d_{1}) k}\\
A_{2} &=& (d_{2}+d_{3})\left[\frac{(r-d_{1})}{\hat{R}}+\frac{r \hat{R} s}{(r-d_{1}) k}\right]\\
&+&\frac{d_{2}}{((r-d_{1})+d_{2}\hat{R})}\left[(r-d_{1})^{2} \left(1-\frac{1}{\hat{R} }\right)+\frac{r \hat{R} s}{k}\right]\\
A_{3} &=&d_{2}d_{3}\left[(r-d_{1})\left(1-\frac{1}{\hat{R}}\right)+\frac{r \hat{R}s}{ (r-d_{1}) k}\right]\\
A_{1} A_{2}-A_{3}&=&\left(\frac{3}{\hat{R}}-1\right)d_{2} d_{3} r+(d_{2}^{2}+d_{3}^{2})\frac{(r-d_{1})}{\hat{R}}\\
&+&(d_{2}+d_{3})\left[\frac{(r-d_{1})^{2}}{\hat{R}^{2}}+\frac{2 r s}{k}+\frac{r^{2}\hat{R}^{2}s^{2}}{(r-d_{1})^2 (k)^{2}}\right]\\
&+& \frac{d_{2} (r-d_{1})}{(r-d_{1})+d_{2}\hat{R}}\left[ (r-d_{1})d_{2}+(r-d_{1})d_{3}+\frac{(r-d_{1})^{2}}{\hat{R}}\right]\left(1-\frac{1}{\hat{R}}\right)\\
 &+&d_{2}(d_{2}+d_{3})\frac{r \hat{R} s}{k}\left[\frac{1}{(r-d_{1})}+\frac{1}{(r-d_{1})+d_{2}\hat{R}}\right]+\frac{r \hat{R} s d_{3}^{2}}{(r-d_{1}) k}\\
&+&\frac{d_{2}r \hat{R} s}{k((r-d_{1})+d_{2}\hat{R})}\left[ (r-d_{1})+\frac{r \hat{R} s}{k(r-d_{1})}\right]
\end{eqnarray*}

According to Routh-Hurwitz criterion, the necessary and sufficient conditions for all the roots of cubic equation (\ref{eqn-4-hep}) to have negative real parts are $A_{1}> 0,A_{2}> 0,A_{3}> 0$ and $A_{1}A_{2}-A_{3}> 0.$ Clearly, $A_{1}>0$. Now,  $A_{2}> 0$ and $A_{3}> 0$ if $\hat{R}>1$ and $A_{1}A_{2}-A_{3}> 0$ provided  $1 < \hat{R}< 3.$  Thus, the endemic equilibrium point $E_{*}(T^{*},I^{*},V_{I}^{*},V_{NI}^{*})$ is locally asymptotically stable if the controlled reproductive number $\hat{R}$ satisfies the condition. It should be noted that the endemic equilibrium point $E_{*}$ will be unstable whenever it exists for $\hat{R}< 1$.
\end{proof}
\subsection{\textbf{Critical drug efficacy}}
In system (1), the effectiveness of interferon and ribavirin are given by the terms $(1-c \eta_1)$ and $\left(1-\frac{\eta_{r}+\eta_{1}}{2}\right)$. Here, these terms are combined into a single term as $(1-c \eta_1) \left(1-\frac{\eta_{r}+\eta_{1}}{2}\right) = 1 - \eta$, where $\eta$ represents the overall drug efficiency. Clearly, $\eta_1 = \eta_r = 0$, that is, drug efficiency is zero before treatment and $0 < \eta_1 < 1, 0 < \eta_r < 1$ during antiviral therapy. The stability criteria for uninfected steady state ($D>0$) is
\begin{eqnarray*}
d_{2}d_{3}-\frac{1}{2}(1-c \eta_{1})(2-\eta_{r}-\eta_{1})\alpha \beta \widehat{T} > 0, \textrm{where},~ \widehat{T}=\frac{k}{2 r} \left[(r-d_{1})+\sqrt{(r-d_{1})^2+\frac{4 r s}{k}} \right]\\
\Rightarrow~ (1-c \eta_1) \left(1-\frac{\eta_{r}+\eta_{1}}{2}\right) < \frac{d_{2}d_{3}}{\alpha \beta \widehat{T}}\\
\Rightarrow~ 1 - \eta < \frac{2 r d_{2}d_{3}}{\alpha \beta [(r-d_1) k + \sqrt{(r-d_1)^2 k^2+ 4 r s k)}]}
\end{eqnarray*}

This clearly indicates that there exists a point that acts as a point of separation between the region of stability for uninfected steady state and the region of stability for infected steady state. This point can be termed as a transcritical bifurcation point, which is given by

\begin{eqnarray*}
1 - \eta = \frac{2 r d_{2}d_{3}}{\alpha \beta [(r-d_1) k + \sqrt{(r-d_1)^2 k^2+ 4 r s k)}]}
\end{eqnarray*}

Accordingly, the critical drug efficacy is defined as
\begin{eqnarray*}
\eta_{\varepsilon} = 1 - \frac{2 r d_{2}d_{3}}{\alpha \beta [(r-d_1) k + \sqrt{(r-d_1)^2 k^2+ 4 r s k)}]} = 1 - \frac{T_0^{*}}{\widehat{T}}
\end{eqnarray*}

where $T_{0}^{*} = \frac{d_2 d_3}{\alpha \beta}$ is the number of uninfected hepatocytes in an infected person before treatment (obtained by putting $\eta_1 = \eta_r = 0$ in $T^*$) and $\widehat{T}$ is the total number of hepatocytes in an uninfected individual.

During antiviral therapy, if $\eta > \eta_{\varepsilon}$, the drug therapy is successful and the viral load can be eradicated. On the other hand, if $\eta < \eta_{\varepsilon}$, the viral load and the infected cells converge to a new steady state with lower values.

\subsection{\textbf{Permanence of the system}}
\begin{definition}
 A system is said to be permanent if there exist positive constants $\delta, \Delta,$ with $0<\delta\leq\Delta$ such that\\
 \[\min \left\{\lim_{x \to +\infty}\inf T(t),\lim_{x \to +\infty}\inf I(t),\lim_{x \to +\infty}\inf V_{I}(t),\lim_{x \to +\infty}\inf V_{NI}(t)\right\}\geq \delta, \]
\[ \max \left\{\lim_{x \to +\infty}\sup T(t),\lim_{x \to +\infty}\sup I(t),\lim_{x \to +\infty}\sup V_{I}(t),\lim_{x \to +\infty}\sup V_{NI}(t)\right\}\leq \Delta \] for all solutions of the system.
\end{definition}

\begin{theorem}\label{th1}
 The system (1-4) is permanent if $(r-d_{1}) -(1-c \eta_{1})\alpha \overline{V_{I}}-\frac{r \overline{I}}{k}>0$.
\end{theorem}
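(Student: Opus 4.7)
The plan is to combine the upper bounds from Theorem 2 with coordinate-wise ODE comparison arguments to produce a common permanence constant $\delta>0$. Let $\overline{T}, \overline{I}, \overline{V_I}, \overline{V_{NI}}$ denote the eventual upper bounds supplied by boundedness, so that for some $T_1>0$ and all $t\ge T_1$ we have $T(t)\le \overline{T}$, $I(t)\le \overline{I}$, $V_I(t)\le \overline{V_I}$, $V_{NI}(t)\le \overline{V_{NI}}$. Taking $\Delta$ as the maximum of these values immediately gives the upper half of the permanence definition.

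For the lower bound on $T$, I would substitute the worst-case upper bounds of $V_I$ and $I$ into equation (1) to obtain, for $t\ge T_1$,
\[
\frac{dT}{dt}\;\ge\; s + \Bigl[(r-d_1) - (1-c\eta_1)\alpha\overline{V_I} - \tfrac{r\overline{I}}{k}\Bigr] T - \tfrac{r}{k} T^{2}.
\]
By the hypothesis, the bracketed quantity is a positive constant $B$, so the right-hand side is a concave quadratic in $T$ with unique positive root $y^{*}=\tfrac{k}{2r}\bigl(B+\sqrt{B^{2}+4rs/k}\bigr)$. A standard differential-inequality comparison then yields $\liminf_{t\to\infty} T(t)\ge y^{*}\equiv \underline{T}>0$.

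For the infected compartments, once $T(t)\ge \underline{T}$ eventually, equations (2)--(3) dominate the cooperative linear subsystem $\dot u = -d_2 u + (1-c\eta_1)\alpha\underline{T}\, v$ and $\dot v = (1-\tfrac{\eta_r+\eta_1}{2})\beta u - d_3 v$. The coefficient matrix has a positive dominant eigenvalue precisely when $\underline{T}>T^{*}$, and in that regime one can choose a weighted positive functional $p(t)=I(t)+\kappa V_I(t)$ with $\kappa>0$ tuned so that $\dot p\ge \mu p$ for some $\mu>0$. This yields uniform positive lower bounds $\underline{I}$ and $\underline{V_I}$. Substituting $I(t)\ge \underline{I}$ into (4) and taking the $\liminf$ of the explicit solution gives $\underline{V_{NI}}=\tfrac{(\eta_r+\eta_1)\beta\underline{I}}{2d_3}>0$, after which setting $\delta=\min\{\underline{T},\underline{I},\underline{V_I},\underline{V_{NI}}\}$ finishes the argument.

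The main obstacle will be this middle step. The stated hypothesis directly controls only the $T$-equation, so rigorously extracting a strictly positive $\liminf$ for $I$ and $V_I$ really requires an additional reproductive-number-type condition (essentially $\hat R>1$ evaluated with $T=\underline{T}$). I expect the authors either to inherit such a condition implicitly from the endemic regime ($\hat R>1$) in which $E_{*}$ exists and is the only biologically meaningful target of a permanence result, or to replace the naive comparison by a more careful chained argument that leverages the specific algebraic relationship between $\underline{T}$ and $T^{*}=d_2 d_3/[(1-c\eta_1)(1-\tfrac{\eta_r+\eta_1}{2})\alpha\beta]$.
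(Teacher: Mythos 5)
Your decomposition matches the paper's for the upper bounds and for the lower bound on $T$: the paper likewise substitutes $\overline{I}$ and $\overline{V_I}$ into equation (1) and reads off $\liminf_{t\to\infty} T(t)\ \ge\ \frac{k}{r}\bigl[(r-d_1)-(1-c\eta_1)\alpha\overline{V_I}-\tfrac{r\overline{I}}{k}\bigr]$ from the resulting logistic inequality, which is exactly where the stated hypothesis enters (the paper drops the $+s$ term, so it genuinely needs the bracket to be positive, whereas your version with $s$ retained would give a positive root regardless). The divergence, and the point you correctly flag as the obstacle, is the persistence of $I$ and $V_I$. The paper does not use a cooperative-subsystem or eigenvalue argument; instead it asserts that, ``from the definition of limit superior,'' for every $\epsilon>0$ there is a time $T_2$ with $I(t)\ge \overline{I}-\epsilon$ for all $t\ge T_2$, then integrates (3) and (4) to obtain $\liminf V_I \ge \tfrac{(2-\eta_1-\eta_r)\beta\overline{I}}{2d_3}$ and the analogous bound for $V_{NI}$, and finally feeds $\underline{V_I}$ and $\underline{T}$ back into (2) to get $\underline{I}$. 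That step is not a valid inference: $\limsup I \le \overline{I}$ gives only an eventual \emph{upper} bound, and even equality in the limit superior would yield $I(t)\ge\overline{I}-\epsilon$ only along a sequence of times, not eventually. So the paper's chain for $\underline{V_I}$, $\underline{V_{NI}}$, $\underline{I}$ rests on a false premise.

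Your diagnosis is therefore correct: the stated hypothesis controls only the $T$-equation, and a genuine uniform-persistence argument for the infected compartments needs a threshold condition of the form $\underline{T}>T^{*}$ (a reproduction number exceeding one at the persistent level of $T$); your weighted functional $p=I+\kappa V_I$ with $\dot p\ge\mu p$ is the standard way to close that step once such a condition is available. The remaining gap in your own write-up is that you do not supply that condition, so as written neither your argument nor the paper's proves the theorem exactly as stated; but yours is the honest version, and with the added hypothesis $\underline{T}>T^{*}$ (or an appeal to a uniform-persistence theorem showing $E_0$ is a uniform weak repeller when $R_0>1$) it becomes a complete and more rigorous proof than the one in the paper.
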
\label{th1}

\begin{proof}
Since all the state variables are bounded, it is obvious that $I(t)\leq M^{**}.$ Then (1) can be rewritten as
\begin{eqnarray*}
\frac{dT}{dt}&=& s + r T\left(1-\frac{T+I}{k}\right)-d_{1} T-(1-c \eta_{1})\alpha V_{I} T \leq s + r T - \frac{r T^{2}}{k}\\
&=& s - \left(\frac{\sqrt{r}T}{\sqrt{k}}-\frac{\sqrt{r k}}{2}\right)^{2}+\frac{r k}{4}\\
& \leq &  \left(s+ \frac{r k}{4}\right).\\
\Rightarrow~~\lim_{t \to +\infty}\sup T(t)&\leq& \left(s+\frac{r k}{4}\right) = \overline{T}~~ (say).
\end{eqnarray*}

(3) and (4) gives, \\
\[\frac{d(V_{I}+V_{NI})}{dt}= \beta I-d_{3}(V_{I}+V_{NI})\leq d_{3}\left(\frac{\beta M^{**}}{d_{3}}-(V_{I}+V_{NI})\right).\]\\
Hence,  \[\lim_{t \to +\infty}\sup (V_{I}+V_{NI})\leq \frac{\beta M^{**}}{d_{3}}.\]
  Therefore, \[\lim_{t \to +\infty}\sup V_{I}\leq \frac{\beta M^{**}}{d_{3}}=\overline{V_{I}}~~(say)\] and  \[\lim_{t \to +\infty}\sup V_{NI}\leq \frac{\beta M^{**}}{d_{3}}=\overline{V_{NI}} ~~(say).\]\\
From (2) we get, \\
\[\frac{dI}{dt}=(1-c \eta_{1})\alpha V_{I} T - d_{2}I \leq (1-c \eta_{1}) \overline{V_{I}} \alpha \overline{T}- d_{2}I =d_{2}\left[\frac{(1-c \eta_{1}){{\overline{V_{I}} \alpha \overline{T}}}}{d_{2}}-I \right] .\]
So \[\lim_{t \to +\infty}\sup I(t)\leq \frac{(1-c \eta_{1})\overline{V_{I}} \alpha \overline{T}}{d_{2}}=\overline{I}.\]
From the definition of limit superior, again for every $\epsilon > 0,$ there exist a time $ T_{2}$ such that $I(t)\geq (\overline{I}-\epsilon)$ for $t\geq T_2.$

Again from (3) we get,
\begin{eqnarray*}
\frac{dV_{I}}{dt}&=& \left(\frac{2-\eta_{1}-\eta_{r}}{2}\right)\beta I-d_{3}V_{I}\geq \left(\frac{2-\eta_{1}-\eta_{r}}{2}\right)\beta(\overline{I}-\epsilon)-d_{3}V_{I}\\
&=& d_{3}\left[\frac{(2-\eta_{1}-\eta_{r}) \beta (\overline{I}-\epsilon)}{2 d_{3}}-V_{I}\right].
 \end{eqnarray*}
Using differential equality we have, \[\lim_{t \to +\infty}\inf V_{I}(t) \geq \frac{(2-\eta_{1}-\eta_{r}) \beta (\overline{I}-\epsilon)}{2 d_{3}},\] Since $  \epsilon > 0$ is arbitrarily small, we have\[\lim_{t \to +\infty}\inf V_{I}(t) \geq \frac{(2-\eta_{1}-\eta_{r}) \beta \overline{I}}{2 d_{3}} = \underline{V_{I}}.\]
From (4) we get,
\begin{eqnarray*}
\frac{dV_{NI}}{dt}&=& \left(\frac{\eta_{1}+\eta_{r}}{2}\right)\beta I-d_{3}V_{NI}\geq\left(\frac{\eta_{1}-\eta_{r}}{2}\right)\beta(\overline{I}-\epsilon)-d_{3}V_{NI}\\
&=& d_{3}\left[\frac{(\eta_{1}+\eta_{r}) \beta (\overline{I}-\epsilon)}{2 d_{3}}-V_{NI}\right].
\end{eqnarray*}
Using differential equality we have, \[\lim_{t \to +\infty}\inf V_{NI}(t) \geq \frac{(\eta_{1}+\eta_{r}) \beta (\overline{I}-\epsilon)}{2 d_{3}},\]
Since $  \epsilon > 0$ is arbitrarily small, we have \[\lim_{t \to +\infty}\inf V_{NI}(t) \geq \frac{(\eta_{1}+\eta_{r}) \beta \overline{I}}{2 d_{3}} = \underline{V_{NI}}.\]
Again from (1),
\begin{eqnarray*}
\frac{dT}{dt}&=& s + r T\left(1-\frac{T+I}{k}\right)-d_{1} T-(1-c \eta_{1})\alpha V_{I} T\\
 &\geq & T \left[(r-d_{1}) -\frac{r I}{k}-(1-c \eta_{1})\alpha V_{I} -\frac{r T}{k}\right]\\
 & \geq & T \left[(r-d_{1}) -\frac{r \overline{I}}{k}-(1-c \eta_{1})\alpha \overline{V_{I}} -\frac{r T}{k}\right]\\
&=&\frac{r}{k} T \left[((r-d_{1}) -(1-c \eta_{1})\alpha \overline{V_{I}}-\frac{r \overline{I}}{k})\frac{k}{r} -T \right].
\end{eqnarray*}
Therefore, \[\lim_{t \to +\infty}\inf T(t)\geq \left[(r-d_{1}) -(1-c \eta_{1})\alpha \overline{V_{I}}-\frac{r \overline{I}}{k}\right]=\underline{T}.\]
if
\begin{equation}\label{eqn-5-hep}
\left[(r-d_{1}) -(1-c \eta_{1})\alpha \overline{V_{I}}-\frac{r \overline{I}}{k}\right]>0.
\end{equation}

From (2) we get,
\begin{eqnarray*}
\frac{dI}{dt}&=&(1-c \eta_{1})\alpha V_{I} T - d_{2}I\geq (1-c \eta_{1})\alpha \underline{V_{I}}\underline{T}- d_{2}I =d_{2}[\frac{(1-c \eta_{1})\alpha \underline{V_{I}}\underline{T}}{d_{2}}-I ].
\end{eqnarray*}
which implies \[\lim_{t \to +\infty}\inf I(t)\geq\frac{(1-c \eta_{1})\alpha\underline{V_{I}}\underline{T}}{d_{2}}=\underline{I}.\]
Therefore, from condition (\ref{eqn-5-hep}) we conclude that  $\underline{T}>0 $, which ensures that $ \underline{I}> 0, \underline{V_{I}}> 0, \underline{V_{NI}} > 0.$ This completes the proof of Theorem.
\end{proof}

\section{\textbf{Global Stability of the endemic equilibrium point $E_{*}$}}

To study the global stability of endemic equilibrium point $E_*$, the geometric approach of Li and Muldowney \cite{Li96}, which guarantees the global stability of endemic equilibrium point by obtaining simple sufficient conditions, has been used.

\begin{theorem}
Let $ x\rightarrow f(x) \in R^{4}$ be a $C^{1}$ function (class of functions whose derivatives are continuous) for $x$ in a simply connected domain $D \subset R^{4},$ where\\\\
$x=\left(
  \begin{array}{c}
    T \\
    I \\
    V_{I} \\
    V_{NI}\\
  \end{array}
\right)$
and
$f(x)=\left(
  \begin{array}{c}
    s + r T\left(1-\frac{T+I}{k}\right)-d_{1} T-(1-c \eta_{1})\alpha V_{I} T \\
    (1-c \eta_{1})\alpha V_{I} T - d_{2}I \\
    \left(1-\frac{\eta_{r}+\eta_{1}}{2}\right)\beta I-d_{3}V_{I} \\
    \left(\frac{\eta_{r}+\eta_{1}}{2}\right)\beta I-d_{3}V_{NI} \\
  \end{array}
\right)
$\\\\\\
Consider the system of differential equations $\dot{x}=f(x)$ subject to the initial condition $(T_{0}, I_{0}, V_{I0}, V_{NI0})^{T}= x_0 (say)$
Let $x(t, x_{0})$ be a solution of the system. The system (1-4) has a unique endemic equilibrium point $E_{*}$ in $D$ and there exits a compact absorbing set $K \subset D.$ It is further assumed that the system (1-4) satisfies Bendixson criterion \cite{Li96}, that is robust under $C^{1}$ local perturbations of $f$ at all non-equilibrium non-wandering points of the system. Let $ x \rightarrow M(x)$ be a $6\times6$ matrix valued function that is $C^{1}$ for $x \in D$. It is also assumed that $M^{-1}(x)$ exists and is continuous for $x \in K$. Then the endemic unique equilibrium point $E_{*}$ is globally stable in $D$ if \begin{equation}\label{eqn-31-hep}
  \bar{q}_{2} = \lim_{t \rightarrow \infty} \sup \sup_{x_{0}\epsilon k}\frac{1}{t}\int_{0}^{t} \mu(B(x(s, x_{0})))ds < 0
\end{equation}
where $B=M_{f}M^{-1}+M \frac{\partial f^{[2]} }{\partial x} M^{-1},$ the value $M_{f}$ is obtained by replacing each entry $m_{ij}$ in $M$ by its directional derivative in the direction of $f, \nabla m^{*}_{ij} f$ and $\mu(B)$ is the Lozinski$\breve{i}$ measure of $B$ with respect to a vector norm $|.|$ in $R^{4}$, defined by \cite{Coppel65},

\begin{equation}\label{eqn-31-hep}
  \mu(B) = \lim_{h \rightarrow 0^{+}} \frac{|I+h B|-1}{h}
\end{equation}
\end{theorem}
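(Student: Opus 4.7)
The statement is a specialization of Li and Muldowney's general geometric criterion \cite{Li96} to the vector field $f$ of system (1)--(4); the proof therefore reduces to verifying the structural hypotheses of their theorem on our domain and then invoking it verbatim, with the quantitative content packaged into the single inequality $\bar q_2<0$.

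First, I would check the regularity and uniqueness requirements. The map $f$ is $C^{1}$ on the positive orthant of $R^{4}$ by inspection of (1)--(4), and the endemic equilibrium $E_{*}$ produced in Section 3.2 is the unique interior fixed point under the standing assumption $\hat R>1$, since the algebraic expressions derived there determine $T^{*},I^{*},V_{I}^{*},V_{NI}^{*}$ uniquely.

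Next, I would assemble the compact absorbing set $K\subset D$ required by the geometric approach. Theorem 2 furnishes the uniform upper bound $\limsup_{t\to\infty}\rho(t)\le M^{*}$, bounding all four coordinates from above, and Theorem 5, under the persistence condition $(r-d_{1})-(1-c\eta_{1})\alpha\overline{V_{I}}-r\overline I/k>0$, supplies strictly positive lower bounds $\underline T,\underline I,\underline{V_{I}},\underline{V_{NI}}$. The closed rectangle $K=[\underline T,\overline T]\times[\underline I,\overline I]\times[\underline{V_{I}},\overline{V_{I}}]\times[\underline{V_{NI}},\overline{V_{NI}}]$ then lies in the interior of $D$, is compact, and absorbs every trajectory. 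Robustness of the Bendixson criterion under $C^{1}$ local perturbations follows automatically for our smooth vector field with a bounded second compound matrix on $K$, as recorded in \cite{Li96}. With these pieces verified, Li and Muldowney's theorem applies directly and yields global asymptotic stability of $E_{*}$ throughout $D$ once $\bar q_2<0$.

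The main obstacle, and indeed the substance of the remaining work, is the construction of the $6\times 6$ matrix-valued function $M(x)$ for which the Lozinski$\breve{\i}$ measure $\mu(B)$ of $B=M_{f}M^{-1}+M(\partial f^{[2]}/\partial x)M^{-1}$ admits a pointwise bound whose long-term average along trajectories in $K$ is strictly negative. Guided by analogous three-dimensional treatments in the Li--Muldowney literature, the natural ansatz is a block-diagonal $M$ whose nonzero entries are ratios such as $I/V_{I}$ and $I/V_{NI}$; after substituting $\dot I/I$ and $\dot V_{I}/V_{I}$ from (2) and (3) along the flow and estimating in the infinity vector norm, the diagonal dominance needed to force $\mu(B)<0$ should emerge from the natural decay rates $-d_{2}$ and $-d_{3}$ in the compound equation. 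Carrying this estimate cleanly across all six rows of the compound system while keeping the parameter dependence transparent is where essentially all of the technical difficulty lies.
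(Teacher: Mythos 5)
Your setup follows the same route as the paper: specialize the Li--Muldowney criterion, obtain the compact absorbing set $K$ from the boundedness result together with permanence (the paper likewise notes that permanence implies persistence and cites Butler--Waltman for the existence of $K$), and then choose a diagonal weight matrix built from the ratio $I/V_I$ acting on the $6\times 6$ second compound system. Up to that point you and the paper agree (the paper uses $Q=\mathrm{diag}\left(1,1,1,1,\tfrac{I}{V_I},\tfrac{I}{V_I}\right)$, not a mix of $I/V_I$ and $I/V_{NI}$, but that is a cosmetic difference).

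The genuine gap is that you stop exactly where the paper's actual proof begins. The entire content of the theorem is the quantitative estimate of $\mu(B)$, which the paper carries out in the Appendix by repeatedly partitioning $B$ into $2\times 2$ blocks, bounding $\mu$ of each block via $\mu(B)\le\max\{\mu(B_{11})+\|B_{12}\|,\ \|B_{21}\|+\mu(B_{22})\}$, substituting $\dot I/I$ and $\dot V_I/V_I$ from equations (2)--(3) along the flow, and arriving at $\mu(B)\le \dot I/I-\overline b$ with an explicit $\overline b$; the conclusion $\bar q_2<0$ then follows by integrating, using $\frac{1}{t}\int_0^t(\dot I/I)\,ds=\frac{1}{t}\ln\left(I(t)/I(0)\right)\to 0$ from the boundedness and persistence of $I$. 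None of this is "automatic diagonal dominance from $-d_2$ and $-d_3$": the resulting $\overline b$ contains positive contributions such as $\beta$, $(1-c\eta_1)\alpha V_I$, $2(1-c\eta_1)\alpha T$, and a max term $\max\{-d_2+(1-\tfrac{\eta_r+\eta_1}{2})\beta,\ -d_3+(1-c\eta_1)\alpha T\}$, so strict negativity of the time average is a parameter-dependent condition, not a consequence of the decay rates alone. Without executing this estimate your argument does not establish the hypothesis $\bar q_2<0$ and therefore does not prove global stability; it only restates the Li--Muldowney framework that the theorem already assumes. A smaller point: the robustness of the Bendixson criterion under $C^1$ perturbations is not something that "follows automatically" from smoothness; in Li--Muldowney it is a consequence of the strict inequality $\bar q_2<0$ itself, which is precisely the thing still to be proved.
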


\begin{proof}
Since permanence with respect to a set of variables implies persistence with respect to the same set of variables, it can be concluded that the system (1-4) is persistent. Clearly, the system (1-4) has a unique endemic equilibrium point $E_*$ in D and the persistence of the system, together with the boundedness of solutions, implies the existence of a compact absorbing set $K \subset D$ \cite{Butler86}.

The Jacobian matrix J of the system (1-4) is\\\\
$J=\left(
  \begin{array}{cccc}
    -\frac{r T}{k}-\frac{s}{ T}& -\frac{r T}{k} & -(1-c \eta_{1})\alpha T& 0 \\
    (1-c \eta_{1})\alpha V_{I}& -d_{2} & (1-c \eta_{1})\alpha T & 0 \\
    0 & (\frac{2-\eta_{r}-\eta_{1}}{2})\beta & -d_{3} & 0 \\
    0 & (\frac{\eta_{r}+\eta_{1}}{2})\beta & 0 & -d_{3}\\
  \end{array}
\right)$\\\\\\
and the corresponding associated second compound matrix $J^{[2]}$ (for detailed discussion of compound matrices, their properties and their relations to differential equations, the readers are referred to \cite{Fiedler74} and \cite{Muldowney90}) is given by

$J^{[2]}=\left(
  \begin{array}{cccccc}
  a_{11}&(1-c \eta_{1})\alpha T & 0 & (1-c \eta_{1})\alpha T & 0 & 0\\
  (\frac{2-\eta_{r}-\eta_{1}}{2})\beta &a_{22}&0& -\frac{r T}{k}&0&0\\
  (\frac{\eta_{r}+\eta_{1}}{2})\beta&0&a_{22}&0&-\frac{r T}{k}&-(1-c \eta_{1})\alpha T\\
  0 &(1-c \eta_{1})\alpha V_{I}&0& -d_{2}-d_{3}&0&0\\
  0&0&(1-c \eta_{1})\alpha V_{I}&0&-d_{2}-d_{3}&(1-c \eta_{1})\alpha T\\
  0&0&0& -(\frac{\eta_{r}+\eta_{1}}{2})\beta&(\frac{2-\eta_{r}-\eta_{1}}{2})\beta&-2d_{3}\\
  \end{array}
\right)$\\\\\\
Set the function $Q = Q(T, I, V_{I},V_{NI}) \equiv   $ diag $\left(1, 1, 1, 1, \frac{I}{V_{I}}, \frac{I}{V_{I}}\right)$, then\\
$Q_{f} Q^{-1 }=  $ diag $\left(0,0,0,0,\frac{\dot{I}}{I}-\frac{\dot{V}_{I}}{V_{I}},\frac{\dot{I}}{I}-\frac{\dot{V}_{I}}{V_{I}}\right)$ and
$B = Q_{f} Q^{-1}+Q J^{[2]}Q^{-1}$\\
Therefore, \\\\
$B = \left(
  \begin{array}{cccccc}
  a_{11}&(1-c \eta_{1})\alpha T & 0 & (1-c \eta_{1})\alpha T & 0 & 0\\
  (\frac{2-\eta_{r}-\eta_{1}}{2})\beta &a_{22}&0& -\frac{r T}{k}&0&0\\
  (\frac{\eta_{r}+\eta_{1}}{2})\beta&0&a_{22}&0&-\frac{r T V_{I}}{I k}&-\frac{(1-c \eta_{1})\alpha T V_{I}}{I} \\
  0 & (1-c \eta_{1})\alpha V_{I}&0&-d_{2}-d_{3} &0&0\\
  0&0&(1-c \eta_{1})\alpha I&0&a_{55}&(1-c \eta_{1})\alpha T\\
  0&0&0& -(\frac{\eta_{r}+\eta_{1}}{2})\frac{\beta I}{V_{I}}&(\frac{2-\eta_{r}-\eta_{1}}{2})\beta&a_{66}\\
  \end{array}
\right)$\\\\\\
$ \equiv \left[
  \begin{array}{cccc}
  B_{11} & B_{12}\\
  B_{21} & B_{22}\\
  \end{array}
\right].$ \\\\\\
where
\begin{eqnarray*}
a_{11} &=& -\frac{r T}{k}-\frac{s}{ T}-d_{2},\\
a_{22} &=& -\frac{r T}{k}-\frac{s}{ T}-d_{3},\\
a_{55} &=& -d_{2}-d_{3}+\frac{\dot{I}}{I}-\frac{\dot{V}_{I}}{V_{I}},\\
a_{66} &=& -2d_{3}+\frac{\dot{I}}{I}-\frac{\dot{V}_{I}}{V_{I}},
\end{eqnarray*}
$B_{11} = [a_{11}]=[-\frac{r T}{k}-\frac{s}{ T}-d_{2}],$
$B_{12}= \left[
  \begin{array}{ccccc}
    (1-c \eta_{1})\alpha T & 0 & (1-c \eta_{1})\alpha T & 0 & 0\\
       \end{array}
\right]$,\\
$B_{21} = [(\frac{2-\eta_{r}-\eta_{1}}{2})\beta, (\frac{\eta_{r}+\eta_{1}}{2})\beta, 0,0,0]^{T}$ and \\
$B_{22} = \left[
  \begin{array}{ccccc}
 a_{22}&0& -\frac{r T}{k} & 0 & 0\\
 0 & a_{22}&0&-\frac{r T V_{I}}{I k}&- \frac{(1-c \eta_{1})\alpha T V_{I}}{I}\\
 (1-c \eta_{1})\alpha V_{I} & 0& -d_{2}-d_{3}&0&0\\
 0&(1-c \eta_{1})\alpha I&0&a_{55}&(1-c \eta_{1})\alpha T\\
 0&0&-(\frac{\eta_{r}+\eta_{1}}{2})\frac{\beta I}{V_{I}}&(\frac{2-\eta_{r}-\eta_{1}}{2})\beta & a_{66}\\
  \end{array}
\right].$\\

We now define  Lozinski$\breve{i}$ measure of matrix B as follows:
 \begin{equation}\label{eqn-6-hep}
\mu(B)\leq \max \{g_{1},g_{2}\}
\end{equation}
where $g_{1} = \mu(B_{11})+ || B_{12} ||$ and $g_{2} =|| B_{21} ||+ \mu(B_{22})$ $(||. ||$ denotes the vector norm).\\\\

It can be shown that (see Appendix)
 \begin{equation}\label{eqn-29-hep}
\mu(B)\leq \frac{\dot{I}}{I}-\overline{b}
\end{equation}
 where
\begin{eqnarray}\label{eqn-30-hep}
\nonumber \overline{b}&=&\frac{\beta I}{V_{I}}-\beta-(1-c \eta_{1})\alpha V_{1}- 2 (1-c \eta_{1})\alpha T-\left(\frac{\eta_{r}+\eta_{1}}{2}\right)\frac{2 \beta I}{V_{I}}\\
&-& \max \left\{-d_{2}+\left(\frac{2-\eta_{r}-\eta_{1}}{2}\right)\beta, -d_{3}+(1-c \eta_{1})\alpha T \right\}
\end{eqnarray}
for sufficiently large t. Then along each solution $(T(t),I(t), V_{I}(t),V_{NI}(t))$ such that $(T(0),I(0), V_{I}(0),V_{NI}(0)) \in E $ we have for $ t > \overline{t}$
\begin{equation}\label{eqn-31-hep}
 \frac{1}{t}\int_{0}^{t} \mu(B)ds \leq \frac{1}{t}\int_{0}^{\overline{t}} \mu(B)ds+\frac{1}{t}\ln\left(\frac{I(t)}{I(\overline{t})}\right)-\left(\frac{t-\overline{t}}{t}\right)\eta.
\end{equation}
Thus boundedness of $I(t)$ and definition of $\overline{q}_{2}$ finally gives $\overline{q}_{2}< 0.$ The endemic equilibrium point $E_{*}(T^{*},I^{*},V_{I}^{*},V_{NI}^{*})$ is globally stable.
\end{proof}

\section{Numerical Results}

 System (1-4) has been simulated for viral kinetics during successful antiviral treatment with the parameter values taken from table 1. Fig.2 shows the kinetics of the viral decline in patients responding to interferon, which is biphasic in nature. During the first phase, almost all patients treated with interferon show rapid dose dependent viral decline for about 24 to 48 hours \cite{Neum98,Lam97,Zeu98}. A second phase starts after about 48 hours where the viral decline is slow. This dynamics is well captured in fig.2, which shows that the viral load declines to a very low level and eventually eradicated, depending on the efficacy of interferon.\\

 When the effectiveness of interferon is quite large, ribavirin does not have significant impact on the second phase decline \cite{Feld05}. This is evident from fig.3. Comparing fig.2 with fig.3, it is observed that both of them portray almost identical graphs though the efficacy of ribavirin in fig.2 is zero ($\eta_r = 0$) and in fig.3 is 0.3 ($\eta_r = 0.3$), confirming the fact that ribavirin fails to alter the viral load when the efficacy of interferon is effectively large.\\

 It has been reported that there are few patients who do not respond to initial interferon therapy and are termed as null-responders. Re-treatment of null-responders to a standard interferon regimen shows a considerable first phase decline, followed by a minor further decline during the second phase and then a rebound of HCV is observed \cite{Bekk97,Bekk98}. This dynamics is well portrayed in fig.4, which corresponds to the rebound of the viral load during standard interferon regime to the re-treatment of non-responders.\\

 However, ribavirin enhances the second phase slope if the effectiveness of interferon is significantly less that 1 (see fig.5). This interesting behavior is due to the fact that interferon initially clears the infectious virions effectively, which results in the first phase decline of the viral load. Ribavirin does not have any effect on the first phase decline as it does not alter the viral load. Also, when the effectiveness of interferon is large, ribavirin has very little role to play as virion production is low but when interferon effectiveness is small, ribavirin renders progeny virions non-infectious and enhances the second phase decline. This behavior has been confirmed by experimental observations \cite{Herrmann03,Layden03,Dahari04}. Fig.5 is able to capture this dynamics with lower interferon efficacy and varying the effectiveness of ribavirin. From the figure it is concluded that the patient attains sustained virological responses (SVR) during the combined drug therapy, that is, the concentration of the viral particles in the blood falls below the detection limit during therapy and remains undetected for 24 weeks even after the therapy is stopped. Patients who exhibit SVR are generally cured of the HCV infection.  \\

 Several experiments have been conducted to evaluate ribavirin monotherapy in the treatment of chronic HCV \cite{Di92,Di95,Bod97}. In all these experiments, no virologic end-of-treatment-responses (ETR) were observed; either there is a minor decline or HCV RNA level remains constant after 3-6 months of ribavirin monotherapy compared with pretreatment values \cite{Di92,Di95}. This dynamics is reflected in fig6. \\

 Fig.7 shows the viral kinetics after the therapy cessation. A virus resurgence post therapy cessation has been simulated for the system (1-4), with different drug efficacies $(\eta_1,\eta_r) = ((0.4,0.6); (0.4,0.8); (0.4,0.99))$ from time 0 to 14 days and then set $\eta_1$ and $\eta_r$ to zero for the rest of the simulation. After 14 days, the virus resurges to pretreatment levels within (7-7.5) days of post therapy cessation. Thus, the model given by system (1-4), predicts resurgences to pretreatment levels after cessation of therapy.\\

 The triphasic decay of viral load for certain parameter values is illustrated in fig.8. As mentioned earlier, viral production and hence viral load decreases due to interferon action during the first phase; as a result of which, the production of infected cells by new infections falls and the total number of cells decline. Homeostatic mechanisms act to restore the total number of cells by cell proliferation. However, since the proliferation term is only with uninfected cells, homeostatic mechanisms result predominantly in the proliferation of the uninfected cells. But the rate at which the uninfected cells gets infected is much less than the rate at which the immune mediated effective killing of virions continues due to interferon effectiveness and this results in the second phase of the triphasic decline. As the effectiveness of interferon gradually decreases, the role of ribavirin becomes predominant and results in further declining of the viral load, which explains the third phase of the triphasic decline of the viral load (see fig.8). \\

 As there is an increase in the influx rate of new hepatocytes(s), the triphasic viral decay disappears and yields a biphasic decay (see fig.9). It is also to be noted that for different drug efficacies close to 1, both the biphasic viral decay (fig.2) and the triphasic viral decay (fig.8) is close to the death rate ($d_2$) of infected hepatocytes (I).

\section{Conclusion}

Millions of people all over the world are infected with HCV with high levels of mortality \cite{NIH02}. The models illustrating the HCV dynamics provide key insights into the HCV pathogenesis in vivo and action mechanism of interferon and ribavirin \cite{Perelson05}. To study the dynamics of hepatitis C virus, a mathematical model comprising of four coupled ordinary differential equations has been proposed to understand the role of ribavirin in interferon therapy. The positivity and boundedness of the system has been established. The basic reproduction number ($R_0$) and the controlled reproduction number ($\widehat{R}$) of the model have been calculated. It is observed mathematically that the local stability of the uninfected steady state depends on basic reproduction number ($R_0$) and infected steady state on controlled reproduction number ($\widehat{R}$). The permanence and global stability analysis of the model are established. A critical drug efficacy component $\eta_c$ has been defined and characterized.\\

Numerical simulation of system (1) shows some interesting results, which mimic the dynamics of hepatitis C virus in patients with successful therapy with interferon and ribavirin. There is a rapid decline in viral load followed by a second slower decline (biphasic decline) until the virus becomes undetectable \cite{Neum98,Colom03,Paw04}. A triphasic decay \cite{Dahari07b} of viral load has also been observed in the model with proper choice of interferon and ribavirin effectiveness. The critical drug efficacy parameter leads to a point (a transcritical bifurcation point) in the model that determines the successful antiviral therapy leading to HCV eradication or leading only to a partial response. Thus, the HCV dynamics that the model predicts, provide some insights on the possible mechanism for the behavior of the viral load observed in the clinic \cite{Dahari07b}. We sincerely hope that the application of this model will give a better understanding in the success of anti-viral therapy in HCV infection.

\section{Appendix}
\textbf{Calculation of Lozinski$\breve{i}$ measure of matrix B}

Lozinski$\breve{i}$ measure of matrix B has been defined as follows:
 \begin{equation}\label{eqn-6-hep}
\mu(B)\leq \max \{g_{1},g_{2}\}
\end{equation}
where $g_{1} = \mu(B_{11})+ || B_{12} ||$ and $g_{2} =|| B_{21} ||+ \mu(B_{22})$ $(||. ||$ denotes the vector norm).

It is easy to compute $\mu(B_{11}) = -\frac{r T}{k}-\frac{s}{ T}-d_{2}$, $||B_{12}||=(1-c \eta_{1})\alpha T $ and $||B_{21}||=\beta $, then
\begin{eqnarray}
\label{eqn-7-hep}
  g_{1}&=& -\frac{r T}{k}-\frac{s}{ T}-d_{2}+(1-c \eta_{1})\alpha T\\
\label{eqn-8-hep}
  g_{2}&=& \beta + \mu(B_{22}=C)
\end{eqnarray}
The matrix $[B_{22}]_{5\times5}$ is partitioned as
$B_{22}= C = \left[
  \begin{array}{cc}
  C_{11}& C_{12}\\
  C_{21}& C_{22}
   \end{array}
\right]$\\
where
$ C_{11}=[a_{22}]=[-\frac{r T}{k}-\frac{s}{ T}-d_{3}],C_{12}=\left[
  \begin{array}{cccc}
    0& -\frac{r T}{k}& 0 & 0\\
       \end{array}
\right]$,
$C_{21} = [0 (1-c \eta_{1})\alpha V_{I} 0 0]^T$ and
$C_{22} = \left[
  \begin{array}{cccc}
 a_{22}&0&-\frac{r T V_{I}}{I k}&- \frac{(1-c \eta_{1})\alpha T V_{I}}{I}\\
    0& -d_{2}-d_{3}&0&0\\
 (1-c \eta_{1})\alpha I&0&a_{55}&(1-c \eta_{1})\alpha T\\
 0&-(\frac{\eta_{r}+\eta_{1}}{2})\frac{\beta I}{V_{I}}&(\frac{2-\eta_{r}-\eta_{1}}{2})\beta & a_{66}\\
  \end{array}
\right]$.\\\\
Accordingly,
\begin{equation}\label{eqn-9-hep}
\mu(C)\leq \max \{g_{3},g_{4}\}
\end{equation}
where 
\begin{eqnarray}
\label{eqn-10-hep}
g_{3} = \mu(C_{11})+ || C_{12}|| = -\frac{s}{ T}-d_{3}\\
\label{eqn-11-hep}
g_{4} =(1-c \eta_{1})\alpha V_{I})+ \mu(C_{22})
\end{eqnarray}
To compute $\mu(C_{22})$, the matrix $[C_{22}]_{5\times5}$ is again partitioned as
$C_{22}= D = \left[
  \begin{array}{cc}
  D_{11}& D_{12}\\
  D_{21}& D_{22}
   \end{array}
\right]$,
where
$ D_{11}=[a_{22}] = [-\frac{r T}{k}-\frac{s}{ T}-d_{3}],
D_{12}=\left[
  \begin{array}{cccc}
    0& -\frac{r T V_{I}}{I k}&- \frac{(1-c \eta_{1})\alpha T V_{I}}{I}\\
       \end{array}
\right]$,\\\\
$D_{21} = [0~~ (1-c \eta_{1})\alpha T~~ 0]^T$ and
$D_{22} = \left[
  \begin{array}{ccc}
     -d_{2}-d_{3}&0&0\\
     0&a_{55}&(1-c \eta_{1})\alpha T\\
 -(\frac{\eta_{r}+\eta_{1}}{2})\frac{\beta I}{V_{I}}&(\frac{2-\eta_{r}-\eta_{1}}{2})\beta & a_{66}\\
  \end{array}
\right]$\\\\
Denoting
\begin{equation}
\label{eqn-12-hep}
\mu(D)\leq \max \{g_{5},g_{6}\}
\end{equation}
 where $g_{5} = \mu(D_{11})+ || D_{12}||$ and $g_{6} =|| D_{21}||+ \mu(D_{22})$.\\\\
We have, $\mu(D_{11}) = -\frac{r T}{k}-\frac{s}{ T}-d_{3}$,\\\\
$||D_{12}||= \frac{T V_{I}}{I}$ max $\left[\frac{r}{k}, (1-c \eta_{1})\alpha\right]$\\\\
$     = \frac{T V_{I}}{I} [(1-c \eta_{1})\alpha]$ as $(1-c \eta_{1})\alpha > \frac{r}{k}$, $||D_{21}||= (1-c \eta_{1})\alpha T.$ Let

$D_{22}= E = \left[
  \begin{array}{cc}
  E_{11}& E_{12}\\
  E_{21}& E_{22}
   \end{array}
\right]$, then
\begin{eqnarray}
\label{eqn-13-hep}
  g_{5}&=&  -\frac{r T}{k}-\frac{s}{ T}-d_{3}+ \frac{T V_{I}}{I} [(1-c \eta_{1})\alpha]\\
\label{eqn-14-hep}
  g_{6}&=& (1-c \eta_{1})\alpha T+ \mu(D_{22}=E)
\end{eqnarray}
and\\
$ E_{11}=[-d_{2}-d_{3}],
E_{12}=\left[
  \begin{array}{cc}
    0&0\\
       \end{array}
\right]$,
$E_{21} = \left[
  \begin{array}{c}
    0\\
  -(\frac{\eta_{r}+\eta_{1}}{2})\frac{\beta I}{V_{I}}\\
     \end{array}
\right]$\\\\
$E_{22} = \left[
  \begin{array}{ccc}
          a_{55}&(1-c \eta_{1})\alpha T\\
 (\frac{2-\eta_{r}-\eta_{1}}{2})\beta & a_{66}\\
  \end{array}
\right]$\\\\
Again, we define Lozinski$\breve{i}$ measure of E as follows:
\begin{equation}\label{eqn-15-hep}
\mu(E)\leq \max \{g_{7},g_{8}\}
\end{equation}
where $g_{7} = \mu(E_{11})+ || E_{12} ||$ and $g_{8} =|| E_{21}||+ \mu(E_{22}).$\\\\
We have, $\mu(E_{11}) = -d_{2}-d_{3}$, $||E_{12}||=0$, $||E_{21}||=\left(\frac{\eta_{r}+\eta_{1}}{2}\right)\frac{\beta I}{V_{I}}$ and
\begin{eqnarray}
 \nonumber \mu(E_{22})&=& \max \left[a_{55}+\left(\frac{2-\eta_{r}-\eta_{1}}{2}\right)\beta, (1-c \eta_{1})\alpha T+a_{66}\right] \\
\nonumber    &=& \left\{\frac{\dot{I}}{I}-\frac{\dot{V}_{I}}{V_{I}}-d_{3}\right\}+\max \left\{-d_{2}+\left(\frac{2-\eta_{r}-\eta_{1}}{2}\right)\beta, -d_{3}+(1-c \eta_{1})\alpha T \right\}
   \end{eqnarray}
Therefore

\begin{eqnarray}\label{eqn-16-hep}
  g_{7}&=&  -d_{2}-d_{3}\\
\label{eqn-17-hep}
\nonumber  g_{8}&=& \left\{\frac{\dot{I}}{I}-\frac{\dot{V}_{I}}{V_{I}}-d_{3}\right\}+\left(\frac{\eta_{r}+\eta_{1}}{2}\right)\frac{\beta I}{V_{I}}\\
  & + & \max \left\{-d_{2}+\left(\frac{2-\eta_{r}-\eta_{1}}{2}\right)\beta, -d_{3}+(1-c \eta_{1})\alpha T \right\}
\end{eqnarray}
So, from equation (\ref{eqn-16-hep}), (\ref{eqn-17-hep}) and inequality (\ref{eqn-15-hep}), we get
\begin{eqnarray}\label{eqn-18-hep}
\nonumber \mu(E)& \leq & \left\{\frac{\dot{I}}{I}-\frac{\dot{V}_{I}}{V_{I}}-d_{3}\right\}+\left(\frac{\eta_{r}+\eta_{1}}{2}\right)\frac{\beta I}{V_{I}}\\
  & + & \max \left\{-d_{2}+\left(\frac{2-\eta_{r}-\eta_{1}}{2}\right)\beta, -d_{3}+(1-c \eta_{1})\alpha T \right\}
\end{eqnarray}
From (\ref{eqn-14-hep}) and (\ref{eqn-18-hep}), we obtain,
\begin{eqnarray}\label{eqn-19-hep}
\nonumber g_{6} &\leq & \left\{\frac{\dot{I}}{I}-\frac{\dot{V}_{I}}{V_{I}}-d_{3}\right\}+2 (1-c \eta_{1})\alpha T+\left(\frac{\eta_{r}+\eta_{1}}{2}\right)\frac{\beta I}{V_{I}}\\
&+& \max \left\{-d_{2}+\left(\frac{2-\eta_{r}-\eta_{1}}{2}\right)\beta, -d_{3}+(1-c \eta_{1})\alpha T \right\}
\end{eqnarray}
Also,
\begin{eqnarray}
\label{eqn-20-hep}
  \frac{\dot{V_{I}}}{V_{I}} &=& \left(\frac{2-\eta_{r}-\eta_{1}}{2}\right)\frac{\beta I}{V_{I}}-d_{3} \\
\label{eqn-21-hep}
 \frac{\dot{I}}{I} &=& \frac{(1-c \eta_{1})\alpha T V_{I}}{I}-d_{2}
\end{eqnarray}
Then from (\ref{eqn-19-hep}) and (\ref{eqn-20-hep}), we get,
\begin{eqnarray}\label{eqn-22-hep}
\nonumber g_{6} &\leq& \frac{\dot{I}}{I}-\frac{\beta I}{V_{I}}+ 2 (1-c \eta_{1})\alpha T+\left(\frac{\eta_{r}+\eta_{1}}{2}\right)\frac{2 \beta I}{V_{I}} \\
&+& \max \left\{-d_{2}+\left(\frac{2-\eta_{r}-\eta_{1}}{2}\right)\beta, -d_{3}+(1-c \eta_{1})\alpha T  \right\}
\end{eqnarray}
From (\ref{eqn-13-hep}) and (\ref{eqn-21-hep}), we get,
\begin{equation}\label{eqn-23-hep}
g_{5}=   \frac{\dot{I}}{I} +d_{2}-\frac{r T}{k}-\frac{s}{ T}-d_{3}
\end{equation}
From (\ref{eqn-12-hep}),(\ref{eqn-22-hep}) and (\ref{eqn-23-hep}), we get,
\begin{eqnarray}\label{eqn-24-hep}
\nonumber \mu(D)& \leq & \frac{\dot{I}}{I}-\frac{\beta I}{V_{I}}+ 2 (1-c \eta_{1})\alpha T+\left(\frac{\eta_{r}+\eta_{1}}{2}\right)\frac{2 \beta I}{V_{I}} \\
&+& \max \left\{-d_{2}+\left(\frac{2-\eta_{r}-\eta_{1}}{2}\right)\beta, -d_{3}+(1-c \eta_{1})\alpha T  \right\}
\end{eqnarray}
From (\ref{eqn-11-hep}) and (\ref{eqn-24-hep}), we get,
\begin{eqnarray}\label{eqn-25-hep}
\nonumber g_{4} &\leq & \frac{\dot{I}}{I}-\frac{\beta I}{V_{I}}+(1-c \eta_{1})\alpha V_{1}+ 2 (1-c \eta_{1})\alpha T+\left(\frac{\eta_{r}+\eta_{1}}{2}\right)\frac{2 \beta I}{V_{I}} \\ &+& \max \left\{-d_{2}+\left(\frac{2-\eta_{r}-\eta_{1}}{2}\right)\beta, -d_{3}+(1-c \eta_{1})\alpha T  \right\}
\end{eqnarray}
From (\ref{eqn-9-hep}), (\ref{eqn-10-hep}) and (\ref{eqn-25-hep}), we get,
\begin{eqnarray}\label{eqn-26-hep}
\nonumber\mu(C) &\leq & \frac{\dot{I}}{I}-\frac{\beta I}{V_{I}}+(1-c \eta_{1})\alpha V_{1}+ 2 (1-c \eta_{1})\alpha T+\left(\frac{\eta_{r}+\eta_{1}}{2}\right)\frac{2 \beta I}{V_{I}} \\ &+& \max \left\{-d_{2}+\left(\frac{2-\eta_{r}-\eta_{1}}{2}\right)\beta, -d_{3}+(1-c \eta_{1})\alpha T  \right\}
\end{eqnarray}
From (\ref{eqn-8-hep}) and (\ref{eqn-26-hep}), it follows
\begin{eqnarray}\label{eqn-27-hep}
\nonumber g_{2} & \leq & \frac{\dot{I}}{I}-\frac{\beta I}{V_{I}}+\beta+(1-c \eta_{1})\alpha V_{1}+ 2 (1-c \eta_{1})\alpha T+\left(\frac{\eta_{r}+\eta_{1}}{2}\right)\frac{2 \beta I}{V_{I}}\\ &+& \max \left\{-d_{2}+\left(\frac{2-\eta_{r}-\eta_{1}}{2}\right)\beta, -d_{3}+(1-c \eta_{1})\alpha T \right\}
\end{eqnarray}
From (\ref{eqn-7-hep}) and (\ref{eqn-21-hep}), we obtain
\begin{equation}\label{eqn-28-hep}
g_{1}=\frac{\dot{I}}{I}-\frac{r T}{k}-\frac{s}{ T}-\frac{(1-c \eta_{1})\alpha V_{I} T}{I}+(1-c \eta_{1})\alpha T
\end{equation}
Therefore, we get
 \begin{equation}\label{eqn-29-hep}
\mu(B)\leq \frac{\dot{I}}{I}-\overline{b}~~\textrm{for sufficiently large t.}
\end{equation}
where
\begin{eqnarray}\label{eqn-30-hep}
\nonumber \overline{b}&=&\frac{\beta I}{V_{I}}-\beta-(1-c \eta_{1})\alpha V_{1}- 2 (1-c \eta_{1})\alpha T-\left(\frac{\eta_{r}+\eta_{1}}{2}\right)\frac{2 \beta I}{V_{I}}\\
&-& \max \left\{-d_{2}+\left(\frac{2-\eta_{r}-\eta_{1}}{2}\right)\beta, -d_{3}+(1-c \eta_{1})\alpha T \right\}
\end{eqnarray}

\begin{acknowledgements}
This study was supported by the Initiation Grant A (Grant number IITR/SRIC/100518) from Indian Institute of Technology, Roorkee, India.
\end{acknowledgements}


%


\begin{table}
\caption{Parameter values used for numerical calculations}
\label{tab:1}       
\begin{tabular}{lll}
\hline\noalign{\smallskip}
Parameters & Values  \\
\noalign{\smallskip}\hline\noalign{\smallskip}
$s$ (constant rate uninfected hepatocytes production) & $1.0 ~cell/ml/day$ \cite{Dahari07a}. \\\hline
$r$ (proliferation rate) & $1.99 /day$ \cite{Dahari07a}; $(0.47-3.7) /day$ \cite{Daheri09}.\\\hline
  $k$ (carrying capacity) & $3.6\times 10^7 cells /ml$ \cite{Dahari07a}. \\\hline
  $d_1$ (natural death of hepatocytes) & 0.01/day \cite{Dahari07a}. \\\hline
  $\alpha$ (rate of infection of uninfected hepatocytes) & $2.25\times 10^{-7} ml/day/virions$ \cite{Dahari07a}; $(1.5-27.8)\times 10^{-7} ml/day/virions$ \cite{Daheri09}.\\\hline
  $d_2$ (natural death rate of infected hepatocytes) & $1.0 /day$ \cite{Dahari07a}; (0.43 - 3.1)/day \cite{Daheri09}. \\\hline
  $\beta$ (rate at which virions are replicated) & $2.9~virions/cell/day$ \cite{Dahari07a}; $(1.2-7.9)~virions/cell/day$ \cite{Daheri09}. \\\hline
  $d_3$ (natural death rate of infected and non infected virions) & $6.0 /day$ \cite{Dahari07a}. \\
\noalign{\smallskip}\hline
\end{tabular}
Note: The effectiveness of the drugs $\eta_1$ and $\eta_r$ lies between 0 and 1. They are chosen accordingly and are mentioned in the figures. The value of c also lies between 0 and 1.
\end{table}

\begin{figure}[!hbtp]
\centering
\includegraphics[width=5.5in]{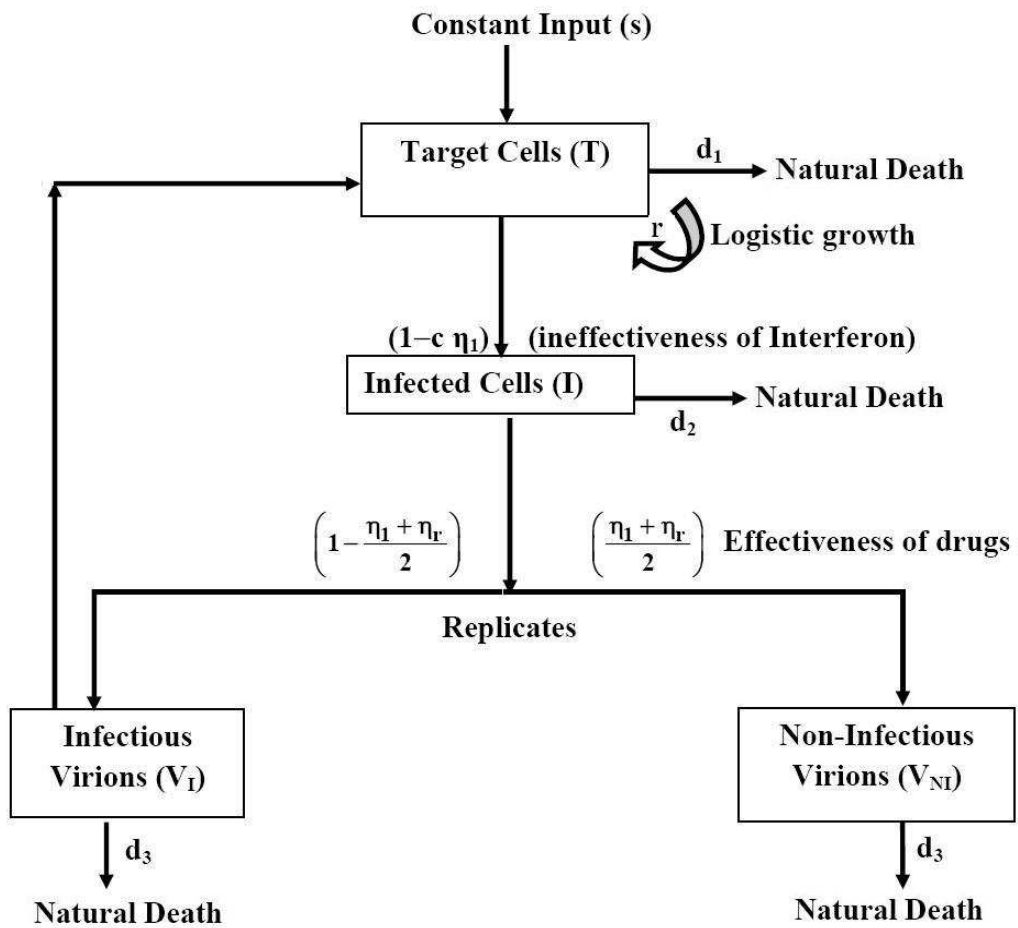}
\caption{\emph{The figure shows the schematic diagram, explaining the dynamics of Hepatitis C virus infection.}}
\end{figure}

\begin{figure}[!hbtp]
\centering
\includegraphics[width=4.0in]{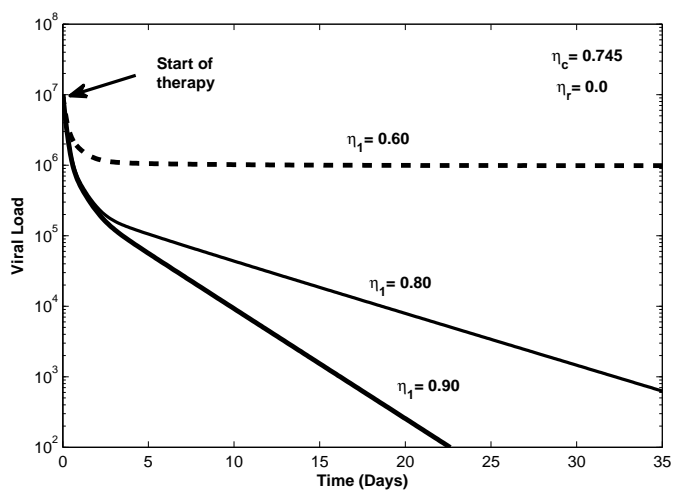}
\caption{\emph{The figure shows the kinetics of viral decline in patients responding to interferon only, which is characterized by biphasic graphs. The effect of ribavirin ($\eta_r$) is taken to be zero in this case. Three biphasic decline has been shown for various drug efficiencies, of which, two of them are greater than the critical drug efficiency $\eta_c$. All the parameter values are taken from Table 1.  }}
\end{figure}

\begin{figure}[!hbtp]
\centering
\includegraphics[width=4.0in]{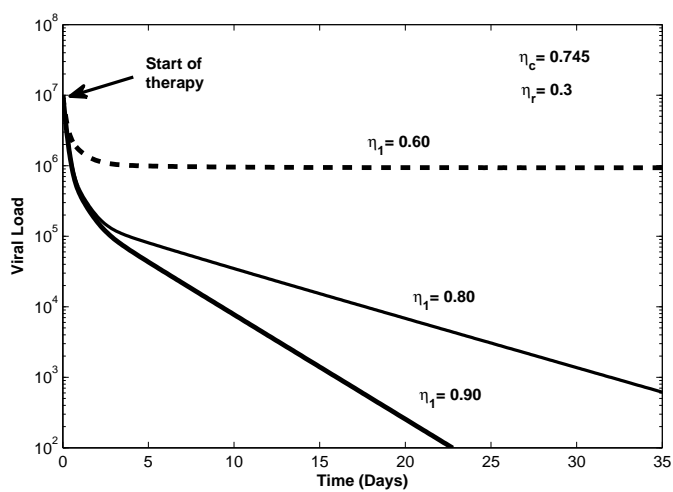}
\caption{\emph{The figure shows that ribavirin has no role to play in reducing the viral load, when the effectiveness of interferon is large. This is evident as viral load has not been altered when compared to figure 2, where effectiveness of ribavirin is zero. All the parameter values are taken from Table 1.}}
\end{figure}

\begin{figure}[!hbtp]
\centering
\includegraphics[width=4.0in]{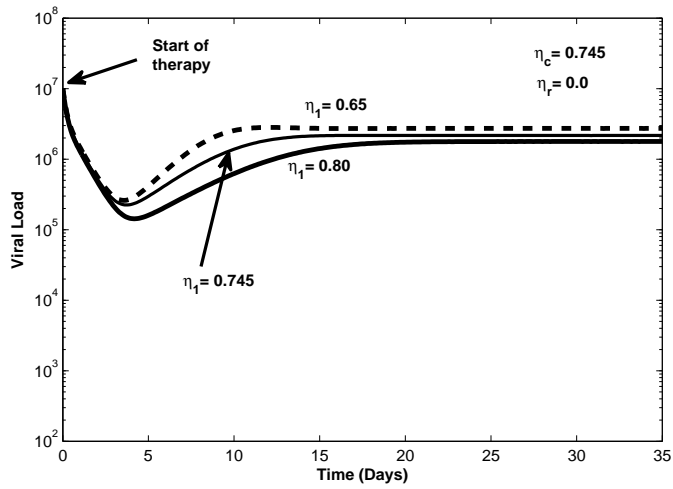}
\caption{\emph{The figure shows the re-treatment of non-responders to a standard interferon regime, which shows a first phase decline, followed by further minor decline during the second phase and then a rebound of viral load, which are sometimes observed in patients. All the parameter values are taken from Table 1.}}
\end{figure}

\begin{figure}[!hbtp]
\centering
\includegraphics[width=4.0in]{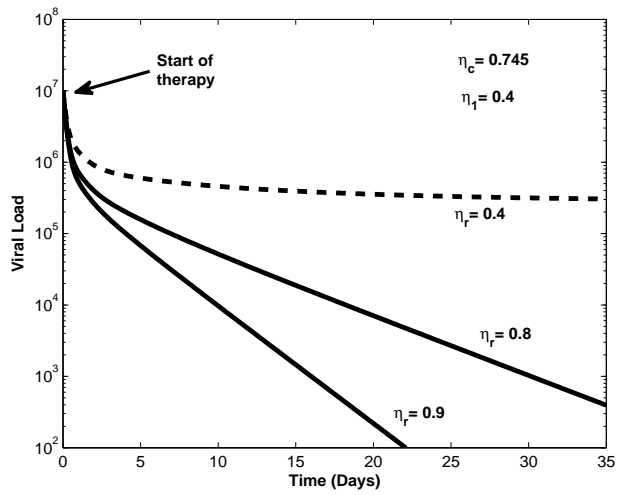}
\caption{\emph{The figure shows the biphasic decline of the viral load when the effectiveness of interferon is small. All the parameter values are taken from Table 1.}}
\end{figure}

\begin{figure}[!hbtp]
\centering
\includegraphics[width=4.0in]{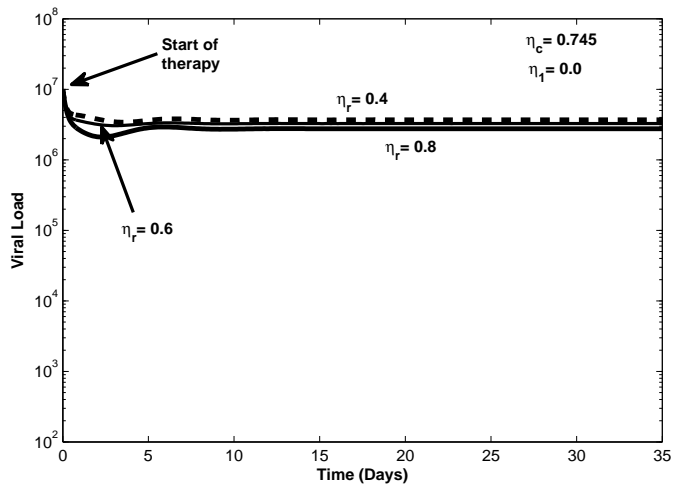}
\caption{\emph{The figure shows the dynamics of ribavirin monotherapy in the treatment of chronic HCV. There is very little decline in the viral load, even with high doses of ribavirin. All the parameter values are taken from Table 1.}}
\end{figure}

\begin{figure}[!hbtp]
\centering
\includegraphics[width=4.0in]{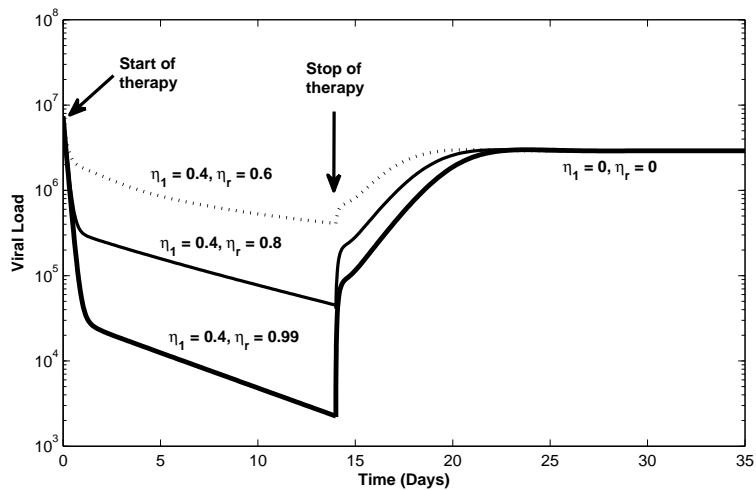}
\caption{\emph{The figure shows the viral kinetics after the therapy cessation. With different drug efficiencies, the decline in viral load is observed from 0 to 14 days. After 14 days, the drug efficiencies are set to zero and the virus resurges to pretreatment levels within (7-7.5) days of post therapy cessation. All the parameter values are taken from Table 1.}}
\end{figure}

\begin{figure}[!hbtp]
\centering
\includegraphics[width=4.0in]{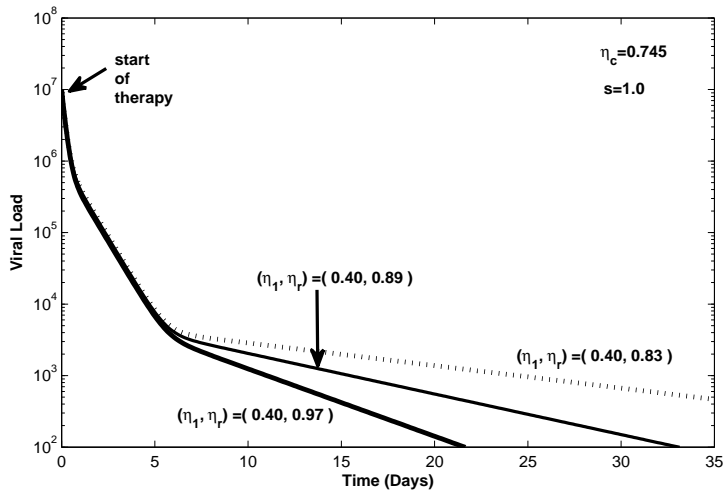}
\caption{\emph{The figure shows the dynamics of triphasic decline in the viral load. All the parameter values are taken from Table 1.}}
\end{figure}

\begin{figure}[!hbtp]
\centering
\includegraphics[width=4.0in]{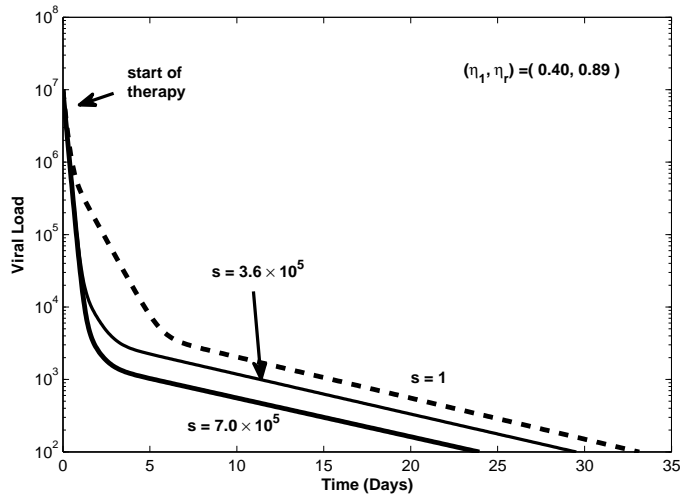}
\caption{\emph{The figure shows the effect of higher influx rates of new hepatocytes, where the triphasic decline of viral load changes to biphasic decline. All the parameter values are taken from Table 1.}}
\end{figure}

\end{document}